\setlist{leftmargin=.8em, itemsep=1pt, topsep=1pt}
\begin{document}

% --- Author Metadata here ---

\numberofauthors{5}
\author{
	Quoc Trung Tran  \\
  \affaddr{UC Santa Cruz} \\
  \affaddr{tqtrung@cs.ucsc.edu}
  \and 
  Ivo Jimenez \\
  \affaddr{UC Santa Cruz} \\
  \affaddr{ivo@cs.ucsc.edu}
  \and
  Rui Wang \\
  \affaddr{UC Santa Cruz} \\
  \affaddr{rwang10@cs.ucsc.edu}
  \and
  Neoklis Polyzotis \\
  \affaddr{UC Santa Cruz}\\
  \affaddr{alkis@cs.ucsc.edu}
  \and Anastasia Ailamaki \\
  \affaddr{\'{E}cole Polytechnique \'{F}ed\'{e}rale de Lausanne}\\
  \affaddr{natassa@epfl.ch}
}

% --- End of Author Metadata ---

%\newcommand{\TODO}[1]{{\textcolor{red}{{\bf TODO: #1}}}}

\newcommand{\stitle}[1]{\vspace{0.8ex} \noindent{\bf #1}}
\newcommand{\eop}{\hspace*{\fill}\mbox{$\Box$}}

\newtheorem{definition}{Definition}
\newtheorem{lemma}{Lemma}
\newtheorem{problem}{Problem}
\newtheorem{theorem}{Theorem}
\newtheorem{property}{Property}
\newtheorem{example}{Example}
\newtheorem{corollary}{Corollary}

\newcommand{\pupd}{\mathit{p_{upd}}}

\newcommand{\noindex}[1]{\mbox{\small\rm  SCAN}_{\small #1}}
\newcommand{\noindextiny}[1]{\mbox{\tiny\rm SCAN}_{\tiny #1}}

\newcommand{\eat}[1]{}
\newcommand{\Iset}{{\cal S}}
\newcommand{\Xsol}{{\cal X}^{*}}
\newcommand{\Xsup}{{\cal X}^{core}}

\newcommand{\contribution}[1]{{\bf (#1.)}}

\newcommand{\Iopt}{I_{opt}}
\newcommand{\Ibalance}{I_{balance}}
\newcommand{\workloadonline}{W_{online}}
\newcommand{\Icurrent}{I^{c}}

\newcommand{\va}{{\bf v}}
\newcommand{\vastar}{{\bf v^{*}}}
\newcommand{\vacounter}{{\bf v^{c}}}

\newcommand{\rita}{{\advisor}}
\newcommand{\cost}{{\mathit{cost}}}
\newcommand{\totalcost}{\mathit{TotalCost}}
\newcommand{\ftotalcost}{\mathit{FTotalCost}}

\newcommand{\TotalCost}{\totalcost} % Alias
\newcommand{\querycost}{\mathit{QueryCost}}
\newcommand{\updatecost}{\mathit{UpdateCost}}

\newcommand{\mapping}{\mathbf{h}}
\newcommand{\routing}{\mapping}
\newcommand{\configuration}{\mathbf{I}}
\newcommand{\rcomponent}[1]{\mathit{h}_{#1}}
\newcommand{\route}[1]{\rcomponent{#1}}
\newcommand{\confcomponent}[1]{\mathit{I}_{#1}}
\newcommand{\expected}{\mathit{ExpTotalCost}}
\newcommand{\Cmat}{{\mathit{C_m}}}

\newcommand{\ccost}{\mathit{ccost}}
\newcommand{\dcost}{\mathit{dcost}}
\newcommand{\rep}{r}
\newcommand{\costopt}{\mathit{cost^{opt}}}
\newcommand{\yopt}{\mathit{yo}}
\newcommand{\xopt}{\mathit{xo}}

\newcommand{\Ccand}{{\it C_{cand}}}

\newcommand{\deploy}{z}
\newcommand{\Ndeploy}{N_{d}}
\newcommand{\Cdeploy}{C_{d}}

\newcommand{\divergent}{{\it DDT}}
\newcommand{\dit}{{\it DDT}}
\newcommand{\workload}{W}

\newcommand{\divgdesign}{\mbox{\textsc{DivgDesign}}}
\newcommand{\divgbip}{\mbox{\textsc{DivBIP}}}
\newcommand{\advisor}{\mbox{\textsc{RITA}}}
\newcommand{\divgbipgreedy}{\mbox{\textsc{DivBIP-greedy}}}
\newcommand{\NAME}{\divgbip}
\newcommand{\name}{\divgbip}
\newcommand{\DIVGDESIGN}{\divgdesign}
\newcommand{\divgunif}{\mbox{\textsc{Unif}}}
\newcommand{\unif}{\mbox{\textsc{Unif}}}
\newcommand{\divgmethod}{\mbox{\textsc{BIP}}}

\newcommand{\load}{\mathit{load}}
\newcommand{\newload}{\mathit{\bar{load}}}
\newcommand{\deltaload}{\ensuremath{\Delta\mathit{load}}}
\newcommand{\loadfail}{\mathit{fload}}

\newcommand{\singlecost}{\mathit{SingleCost}}

\newcommand{\DA}{\mathit{DBAdv}}
\newcommand{\Cx}[1]{C^{#1}}
\newcommand{\divconfiguration}{{\cal I}}

\newcommand{\failurefactor}{{\alpha}}
\newcommand{\nodefactor}{{\tau}}

\newcommand{\punif}{{p_{\mbox{\tiny\rm UNIF}}}}
\newcommand{\pdivg}{{p_{\mbox{\tiny\rm RITA}}}}
\newcommand{\prita}{{p_{\mbox{\tiny\rm RITA}}}}
\newcommand{\pdd}{{p_{\mbox{\tiny\rm DD}}}}
\newcommand{\pcurr}{p_{\mbox{\tiny\rm RITA}}^{\mbox{\tiny\rm curr}}}
\newcommand{\pslide}{p_{\mbox{\tiny\rm RITA}}^{\mbox{\tiny\rm slide}}}
\newcommand{\slide}{{\mbox{\tiny\rm slide}}}
\newcommand{\curr}{{\mbox{\tiny\rm curr}}}

\newcommand{\tplans}{\mathit{TPlans}}
\newcommand{\TPlans}[1]{\mathit{TPlans}(#1)}
\newcommand{\atom}{\mathit{Atom}}

\newcommand{\static}{\mathit{TPCDS\mbox{-}query}}
\newcommand{\mix}{\mathit{TPCDS\mbox{-}mix}}
\newcommand{\dynamic}{\mathit{TPCDS\mbox{-}dyn}}

\newcommand{\costbip}{{\hat{\cost}}}
\newcommand{\costbipf}{{\hat{\fcost}}}
\newcommand{\biptotalcost}{\hat{\totalcost}}
\newcommand{\bipquerycost}{\hat{\querycost}}
\newcommand{\bipupdatecost}{\hat{\updatecost}}

%%% algorithmic
\newenvironment{myalgorithmic}{%
\renewenvironment{algocf}[1][h]{}{}% pass over the floating stuff
\algorithm
\small
}{%
\endalgorithm
}
\newcommand{\FunctionTitle}[1]{
  \everypar={\relax}
  \setcounter{AlgoLine}{0} 
  \textbf{Function }#1 \\
}
\newcommand{\ProcedureTitle}[1]{
  \everypar={\relax}
  \setcounter{AlgoLine}{0} 
  \textbf{Procedure }#1 \\
}
\newcommand{\FuncSep}{
  \BlankLine
  \BlankLine
  \BlankLine
%  \BlankLine
}
\SetKwInput{KwInit}{Initialization}
\SetKwInput{KwKnobs}{Knobs}
\newcommand{\dc}{\ensuremath{\!\!\dblcolon\!\!}}
\newcommand{\smallCommentText}[1]{{\small\it #1}}
\SetCommentSty{smallCommentText}

\newcommand\tightpara{\looseness=-1}

\newif\ifshowadd
\newif\ifshowrem

%\showaddtrue
%\showremtrue

\newcommand\edit[2]{%
	\ifshowadd%
		\textcolor{red}{#1}%
	\else%
		{#1}%
	\fi%
    \ifshowrem%
        { \textrm{\sout{#2}}}%
    \fi%
}

\newcommand\rem[1]{%
	\ifshowrem%
		\textrm{\sout{#1}}%
	\fi%
}

\newcommand\add[1]{%
	\ifshowadd%
		\textcolor{red}{#1}%
	\else%
		{#1}%
	\fi%
}

\newcommand\cspace{\vspace*{-1.2em}}

\newlength\dlf
\newcommand\alignedbox[2]{
  % #1 = before alignment
  % #2 = after alignment
  &
  \begingroup
  \settowidth\dlf{$\displaystyle #1$}
  \addtolength\dlf{\fboxsep+\fboxrule}
  \hspace{-\dlf}
  \boxed{#1 #2}
  \endgroup
}

\title{RITA: An Index-Tuning Advisor for Replicated Databases}

%
% You need the command \numberofauthors to handle the "boxing"
% and alignment of the authors under the title, and to add
% a section for authors number 4 through n.
%

\maketitle

\begin{abstract}
Given a replicated database, a divergent design tunes the indexes 
in each replica differently in order to specialize it 
for a specific subset of the workload. 
This specialization brings significant performance gains 
compared to the common practice of having the same indexes in all replicas, 
but requires the development of new tuning tools for database administrators. 
In this paper we introduce $\rita$ 
({\bf R}eplication-aware {\bf I}ndex {\bf T}uning {\bf A}dvisor), 
a novel divergent-tuning advisor that offers several essential features 
not found in existing tools: 
\add{
it generates robust divergent
designs that allow the system to adapt gracefully to replica failures;
}
it computes designs that spread the load evenly among specialized replicas, 
both during normal operation and when replicas fail; it monitors the workload online in order to detect changes that require a recomputation of the divergent design; and, it offers suggestions to elastically reconfigure the system (by adding/removing replicas or adding/dropping indexes) to respond to workload changes. The key technical innovation behind $\rita$ is showing that the problem of selecting an optimal design can be formulated as a Binary Integer Program (BIP).
The BIP has a relatively small number of variables, 
which makes it feasible to solve it efficiently using any off-the-shelf 
linear-optimization software. 
Experimental
results demonstrate that $\rita$ 
computes better divergent designs compared to existing tools,
offers more features, and has fast execution times. 

\end{abstract}

%%%%%%%%%%%% From Natassa 

%%%%
\section{Introduction}
\label{sec:intro}
%%%%

Database replication is used heavily in distributed systems and database-as-a-service platforms (e.g., Amazon's Relational Database Service~\cite{RDS} or Microsoft SQL Azure~\cite{azure}), to increase availability and to improve performance through parallel processing. The database is typically replicated across several nodes, and replicas are kept synchronized (eagerly or lazily) when updates occur so that incoming queries can be evaluated on any replica. 

\emph{Divergent-design tuning}~\cite{Divergent2012} 
represents a new paradigm to tune workload performance over a replicated database.
A divergent design leverages replication as follows: it specializes each replica to a specific subset of the workload by installing indexes that are particularly beneficial for the corresponding workload statements. Thus, queries can be evaluated more efficiently by being routed to a specialized replica. As shown in a previous study, a divergent design brings significant performance improvements when compared to a uniform design that uses the same indexes in all replicas: queries are executed faster due to replica specialization (up to 2x improvement on standard benchmarks), but updates as well become significantly more efficient (more than 2x improvement) since fewer indexes need to be installed per replica. 

To reap the benefits of divergent designs in practice, DB administrators need new index-tuning advisors that are replication-aware. 
The original study~\cite{Divergent2012} introduces
an advisor called $\divgdesign$, which creates specialized designs per replica but has severe limitations that restrict its usefulness in practice. Firstly,
$\divgdesign$ assumes that replicas are always operational. Replica failures, however, are common in real systems, and the resulting workload redistribution may cause queries to be routed to low-performing replicas, with predictably negative effects on the overall system performance. An effective advisor should generate robust divergent designs that allow the system to adapt gracefully to replica failures.
Secondly, $\divgdesign$ ignores the effect of specialization to each replica's load, and can therefore incur a skewed load distribution in the system. Our experiments suggest that  $\divgdesign$ can cause certain replicas to be twice as loaded as others. A good advisor should take the replica load into account, and generate divergent designs that provide the benefits of specialization while maintaining a balanced load distribution.
Lastly, $\divgdesign$ targets a static system where the database workload and the number of replicas are assumed to remain unchanged. 
A replicated database system, however, is typically volatile: the workload may change over time, and in response the DBA may wish to elastically reconfigure the system by expanding or shrinking the set of replicas and by incrementally adding or dropping indexes at different replicas. 
A replication-aware advisor should alert the DBA when a workload change necessitates 
retuning the divergent design, and also help the DBA evaluate options for changing the design.

The limitations of $\divgdesign$ stem from the fact that 
it internally employs a conventional index-tuning advisor, e.g., DB2's {\tt db2advis} 
or the index advisor of MS SQL Server, which is not suitable for 
modeling and solving the aforementioned issues. 
Modifying $\divgdesign$ to address its limitations 
would require a non-trivial redesign of the advisor. 
A more general question is whether it is even feasible to 
reap the performance benefits demonstrated in~\cite{Divergent2012} 
and at the same time maintain a balanced load and 
the ability to adapt gracefully to failures. 
Our work shows that this is indeed feasible 
but requires the development of a new type of index-tuning advisor that is replication-aware.

\stitle{Contributions.} In this paper, we introduce a novel index advisor termed $\rita$ (Replication-aware Index Tuning Advisor) that provides DBAs with a powerful tool for divergent index tuning. Instead of relying on conventional techniques for index tuning, $\rita$ is a new type of index advisor that is designed from the ground up 
to take into account replication and the unique characteristics of divergent designs.
$\rita$'s foundation is a novel reduction of the problem of divergent design tuning to Binary Integer Programming (BIP). The BIP formulation allows $\rita$ to employ an off-the-shelf linear optimization solver to compute near-optimal designs that satisfy complex constraints (e.g., even load distribution or robustness to failures). Compared to $\divgdesign$, $\rita$ offers richer tuning functionality and is able to compute divergent designs that result in significantly better performance. 

More concretely, the contributions of our work can be summarized as follows:
%%%

\noindent
\contribution{1} To make divergent designs suitable for the characteristics of real-world systems, we introduce a generalized version of the problem of divergent design tuning that has two important features: it takes into account the probability of replica failures and their effect on workload performance; and, it allows for an expanded class of constraints on the computed divergent design and in particular constraints on global system-properties, e.g., maintaining an even load distribution (Section~\ref{sec:problem}).

\noindent
\contribution{2} We prove that, under realistic assumptions about the underlying system, the generalized tuning problem can be formulated as a compact Binary Integer Program (BIP), i.e., a linear-optimization problem with a relatively small number of binary variables. The implication is that we can use an off-the-shelf solver to efficiently compute a (near-)optimal divergent design that also satisfies any given constraints (Section~\ref{sec:div-bip}). 
	
\noindent
\contribution{3} We propose $\rita$ as a new index-tuning tool that leverages the previous theoretical result to implement a unique set of features. $\rita$ allows the DBA to initially tune the divergent design of the system using a training workload. Subsequently, $\rita$ continuously analyzes the incoming workload and alerts the DBA if a retuning of the divergent design could lead to substantial performance improvements. The DBA can then examine how to elastically adapt the divergent design to the changed workload, e.g., by expanding/shrinking the set of replicas, incrementally adding/removing indexes, or changing how queries are distributed across replicas. Internally, $\rita$ translates the DBA's requests to BIPs that are solved efficiently by a linear-optimization solver. In fact, $\rita$ often returns its answers in seconds, thus facilitating an exploratory approach to index tuning (Section~\ref{sec:tool}).
	
\noindent
\contribution{4} We perform an extensive experimental study to validate the effectiveness of $\rita$ as a tuning advisor. The results show that the designs computed by $\rita$ can improve system performance by up to \add{a factor of four} compared to the standard uniform design that places the same indexes on all replicas. Moreover, $\rita$ outperforms $\divgdesign$ by up to
\add{a factor of three}
in terms of the performance of the computed divergent designs, while supporting a larger class of constraints (Section~\ref{sec:expt}). 

Overall, $\rita$ provides a positive answer to our previously stated question: a divergent design can bring significant performance benefits 
while maintaining important properties such as 
a balanced load distribution and tolerance to failures. Consequently, divergent design advisors can be practically employed on real systems and guide further development of tuning tools. The underlying theoretical results (problem definition and BIP formulation) are also significant, as they expand on the previous work on single-system tuning~\cite{Dash2011} 
and demonstrate a wider applicability of Binary Integer Programming to index-tuning problems.

%%%%%%%%%%%%%%%%%%%%%%%%%%%%%%%
\eat{
Database replication is used heavily in distributed database systems and database-as-a-service platforms (e.g., Amazon's Relational Database Service~\cite{RDS} or Microsoft SQL Azure~\cite{azure}) in order to achieve high availability and also to improve performance through parallel processing. In this setting, the database is replicated across several nodes, replicas are kept synchronized (eagerly or lazily) when updates occur, and incoming queries can be evaluated on any replica. 

\emph{Divergent designs}~\cite{Divergent2012} represent a new paradigm to tune workload performance over a replicated database.
A divergent design leverages replication as follows: it specializes each replica to a specific subset of the workload by installing indexes that are particularly beneficial for the corresponding workload statements. Thus, queries can be evaluated more efficiently by being routed to a specialized replica. As shown in a previous study, a divergent design brings significant performance improvements compared to a uniform design that uses the same indexes in all replicas: queries are executed faster due to replica specialization (up to 2x improvement on standard benchmarks), but updates as well become significantly more efficient (more than 2x improvement) since fewer indexes need to be installed per replica.

To reap the benefits of divergent designs in practice, DB administrators need new index-tuning advisors that are replication-aware. 
The original study~\cite{Divergent2012} introduced
an  advisor called $\divgdesign$, 
but we argue that this advisor has severe limitations that restrict its usefulness in practice.
Specifically, $\divgdesign$ specializes replicas under the assumption that they are always operational. However, replica failures are common in real systems, and the resulting workload redistribution may cause queries to be routed to badly-performing replicas, with predictably bad effects on system performance. An effective advisor should take this possibility into account and generate divergent designs that allow the system to adapt gracefully to replica failures.
Another shortcoming is that $\divgdesign$ ignores how specialization affects the load of each replica, and can thus lead to a skewed load distribution in the system. In fact, our experiments suggest that  $\divgdesign$ can cause certain replicas to be twice as loaded as other replicas. A good advisor should take replica-load into account, and generate divergent designs that provide the benefits of specialization while maintaining a balanced load distribution.
Lastly, $\divgdesign$ essentially targets a static system where the database workload and the number of replicas are assumed to stay unchanged. However, a replicated database system can be quite volatile: the workload may change over time, and in response the DBA may wish to elastically reconfigure the system by expanding/shrinking the set of replicas and by incrementally adding/dropping indexes at different replicas. 
A replication-aware advisor should alert the DBA when a workload change necessitates a retuning of the divergent design, and also help the DBA evaluate options for changing the design.

The limitations of $\divgdesign$ stem from the fact that it internally employs a conventional index-tuning advisor, e.g., DB2's {\tt db2advis} or the index advisor of MS SQL Server, which is not suitable to model and solve the aforementioned issues related to divergent designs. Modifying $\divgdesign$ to address its limitations would require a complete redesign of the advisor which is far from trivial. Moreover, one can ask this question about the feasibility of divergent designs in practice: can we reap the performance benefits demonstrated in~\cite{Divergent2012} when we impose constraints on load-balancing and failure adaptation, or is there an inherent tradeoff between the ability to specialize replicas and these important constraints? Answering this question is key for the further development of tools for divergent design tuning.

\stitle{Our Contributions.} We introduce in this paper a novel index advisor termed $\rita$ (short for Replication-aware Index Tuning Advisor) that provides DBAs with a powerful tool for divergent index tuning. Instead of relying on conventional techniques for index tuning, $\rita$ is a new type of index advisor that is designed from the ground up for divergent designs. $\rita$'s foundation is a novel reduction of the problem of divergent design tuning to Binary Integer Programming (BIP). This  allows $\rita$ to employ an off-the-shelf linear optimization solver in order to compute near-optimal designs that satisfy complex constraints (e.g., even load distribution or robustness to failures). Compared to $\divgdesign$, \rem{the current state-of-the-art,} $\rita$ offers richer tuning functionality and is able to compute divergent designs that result in significantly better performance. 

More concretely, the contributions of our work can be summarized as follows:
\begin{itemize}
	\item (Section~\ref{sec:problem}) To make divergent designs suitable for the characteristics of real-world systems, we introduce a generalized version of the problem of divergent design tuning that has two important features: it takes into account the probability of replica failures and their effect on workload performance; and, it allows for an expanded class of constraints on the computed divergent design and in particular constraints on global system-properties, e.g., maintaining an even load distribution.

	\item (Section~\ref{sec:div-bip}) We prove that, under realistic assumptions about the underlying system, the generalized tuning problem can be formulated as a compact Binary Integer Program (BIP), i.e., a linear-optimization problem with a relatively small number of binary variables. The implication is that we can use an off-the-shelf solver to efficiently compute a (near-)optimal divergent design that also satisfies any given constraints. 
	
	\item (Section~\ref{sec:tool}) We propose $\rita$ as a new index-tuning tool that leverages the previous theoretical result to implement a unique set of features. $\rita$ allows the DBA to initially tune the divergent design of the system using a training workload. Subsequently, $\rita$ continuously analyzes the incoming workload and alerts the DBA if a retuning of the divergent design could lead to substantial performance improvements. The DBA can then examine how to elastically adapt the divergent design to the changed workload, e.g., by expanding/shrinking the set of replicas, incrementally adding/removing indexes, or changing how queries are distributed across replicas. Internally, $\rita$ translates the DBA's requests to BIPs that are solved efficiently by a linear-optimization solver. In fact, $\rita$ often returns its answers in seconds, thus facilitating an exploratory approach to index tuning. 
	
	\item (Section~\ref{sec:expt}) We perform an extensive experimental study to validate the effectiveness of $\rita$ as a tuning advisor. The results show that the designs computed by $\rita$ can improve system performance by up to a factor of two compared to the standard uniform design that places the same indexes on all replicas. Moreover, $\rita$ outperforms $\divgdesign$ by up to 35\% in terms of the performance of the computed divergent designs, while supporting a larger class of constraints.
\end{itemize}

Overall, this paper provides a positive answer to our previous question: a divergent design can bring significant performance benefits \emph{and} maintain important properties such as a balanced load distribution and tolerance to failures. This answer advocates the applicability of divergent designs in real systems and the further development of tuning tools. The underlying theoretical result is also significant, as it expands on the results of previous works on single-system tuning~\cite{Dash2011} and demonstrates a wider applicability of Binary Integer Programming to index-tuning problems.
}

%%%%%%%%%%%%
\section{Related work}
\label{sec:related}
%%%%%%%%%%%

\stitle{Index tuning. } 
There has been a long line of research studies on the problem of tuning 
the index configuration of a single DBMS (e.g.,~\cite{Bruno2008, Dash2011,Zrllsgf2004}). 
These methods 
analyze a representative workload  and recommend an index configuration
 that optimizes the evaluation of the workload according to the optimizer's 
estimates. 
A recent study~\cite{Dash2011} has introduced
the COPHY index advisor that
outperforms state-of-the-art commercial and research techniques
by up to an order of magnitude 
in terms of both solution quality and total execution time. 
Both \rita\ and COPHY leverage the same underlying principle of linear composability, which we will define and discuss extensively in Section~\ref{sec:inum}, 
in order to cast the index-tuning problem as a compact, 
efficiently-solvable Binary Integer Program (BIP). 
However, COPHY targets the conventional index-tuning problem 
where the goal is to compute a single index configuration for 
a {\it single-node} system. 
This problem scenario is much simpler than what we consider in our work, where there are several nodes in the system, each can carry a different index configuration, queries have to be distributed in a balanced fashion and the system must recover gracefully from failures. Leveraging the principle of linear composability in this generalized problem scenario is one of the key contributions of our work.

\rem{
Shinobi~\cite{Wu2011} is a system that utilizes workload information 
to partition the data and selectively index data within each partition. 
This results in less expensive index maintenance and reorganization costs, 
by creating and dropping indexes on subsets of the data (the workload-based partitions) 
as the access patterns change.  
However, Shinobi does not address replication of partitions or different index configurations on replicas of 
the same partition, 
which is the problem that we examine in our work. 
In fact, our techniques can be used to determine which indexes to install on each replica, 
and then Shinobi can be responsible for maintaining only the fragments of these indexes that are important for the current workload patterns.
}

%%%%
\stitle{Physical data organization on replicas. } 
Previous works also considered the idea of diverging the physical organization of replicated data. 
The technique of Fractured Mirrors~\cite{Ramamurthy2003} builds 
a mirrored database that stores its base data in 
a different physical organizations on disk 
(specifically, in a row-based and  a column-based organization). 
Similarly, Distorted Mirrors \cite{Solworth1991} presents logically 
but not physically identical mirror disks for replicated data. 
However, they do not consider how to tune the indexes for each mirror.

There are recent works~\cite{Jindal2011, Dittrich2012}
that explore different physical designs for different replicas in
Hadoop context. 
Specifically, 
TROJAN HDFS~\cite{Jindal2011} organizes
each physical replica of an HDFS block
in a different data layout, where
each data layout corresponds to a different vertical partitioning. 
Likewise, HAIL~\cite{Dittrich2012}
organizes each physical replica of an HDFS block
in a different sorted order, which essentially amounts to exactly one clustered index per replica. 
Our work differs from these works in several aspects. 
First, these works 
do not consider the problem of spreading the load evenly among specialized replicas
that we are considering. 
Second, 
performance is very sensitive to failures, because the tuning options considered by 
these papers lead to replicas that are highly specialized for subsets of the workload. 
When a replica fails, the corresponding queries will be rerouted to replicas with 
little provision to handle the rerouted workload, 
and hence performance may suffer. 
In contrast, we focus on divergent designs 
that directly take into account the possibility of replicas failing, 
thus offering more stable performance when one or more replicas become unavailable.
Third, \cite{Dittrich2012}
creates one index per replica which restricts the extent 
to which we can tune each replica to the workload. 
Our methods do not have any such built-in limitations and are 
only restricted by configurable constraints on the materialized indexes 
(e.g., total space consumption, or total maintenance cost).
Fourth, our work can return 
a set of possible designs that represent
trade-off points within a multi-dimensional space,
e.g., between workload evaluation cost and design-materialization cost. 
These works  do not support this functionality.

The original study on divergent-design tuning~\cite{Divergent2012}
introduced the $\DIVGDESIGN$ advisor 
which is the direct competitor to our proposed \rita\ advisor. 
However, as we discussed in Section~\ref{sec:intro}, 
$\divgdesign$ is fundamentally limited by the functionality of the 
underlying single-system advisor, 
and cannot support 
many essential
tuning functionalities
as \rita .

\section{Divergent Design Tuning: Problem Statement}
\label{sec:problem}
%%%

In this section, we formalize the problem of divergent design tuning. The problem statement borrows several concepts from the original problem statement in~\cite{Divergent2012} but also provides a non-trivial generalization. A comparison to the original study appears at the end of the section.

%%%%%
\subsection{Basic Definitions}
%%%%

We consider a database comprising tables $T_1, \dots, T_n$. 
An \emph{index configuration} $X$ is a set of indexes defined over the database tables. 
We assume that $X$ is a subset of a universe of candidate indexes 
$\Iset = \Iset_1 \ \cup \ \cdots \ \cup \ \Iset_n$, where $\Iset_i$ represents 
the set of candidate indexes on table $T_i$. 
Each $\Iset_i$ represents a 
very large set of 
indexes and 
can be derived manually by the DBA or 
by mining the query logs. We do not place any limitations on the indexes regarding their type
or the type or count of attributes that they cover,
except that each index in $X$ is defined on exactly one table (i.e., no join indexes).\tightpara

%%%
We use $\cost(q,X)$ to denote the cost of evaluating
query~$q$ assuming that $X$ is materialized. 
The cost function can be evaluated efficiently 
in modern systems (i.e., without materializing $X$) using a 
\emph{what-if optimizer}~\cite{ChaudhuriN98}. 
We define $\cost(u,X)$ similarly for an update statement $u$, 
except that in this case we also consider the overhead of maintaining the indexes in $X$ due to the update. 
Following common practice~\cite{Schnaitter2009,Dash2011}, 
we break the execution of $u$ into two orthogonal components: 
(1) a query shell $q_{sel}$ that selects the tuples to be updated,
and (2) an update shell that performs the 
actual update on base tables and also updates any affected materialized indexes. Hence, the total cost of 
an update statement can be expressed as 
$cost(u, X) = cost(q_{sel}, X) + \sum_{a \in X}ucost(u,a)  + c_u$, 
where $ucost(u,a)$ is the cost to update index $a$ with the effects of 
the update and can be estimated again using the what-if optimizer. 
The constant $c_u$ is simply the cost to update the base data which does not depend on $X$.

We consider a database that is fully replicated in $N$ nodes, i.e., each node $i \in [1,N]$ holds a full copy of the database.
The replicas are kept synchronized by forwarding each database update to all replicas (lazily or eagerly). 
At the same time, a query can be evaluated by any replica. Since we are dealing with a multi-node system, 
we have to take into account the possibility of replicas failing. 
We use $\alpha$ to denote the probability of at least one replica failing.  
\edit{
Setting this parameter can be done once in the beginning to the best of the DBA's ability 
and then it can be updated with easy statistics as the system is used 
(you adjust it based on the failure rate you see).}
{and assume that it can be set by the DBA based on expectations about system reliability.} 
To simplify further notation, we will assume that at most one replica can fail at any point in time. The extension to multiple replicas failing together is straightforward for our problem.\tightpara

%%%
We define $\workload=Q \cup U$ as a
workload comprising a set $Q$ of query statements and a set $U$ of update statements.
Workload $\workload$ serves as the representative workload for
tuning the system. As is typical in these cases, we also define a weight function $f: \workload \rightarrow \Re$ such that $f(x)$ corresponds to the importance of query or update statement $x$ in $\workload$. The input workload and associated weights can be hand-crafted by the DBA or they can be obtained automatically, e.g., by analyzing the query logs of the database system. 

%%%%
\subsection{Problem Statement}
%%%%

%%%
At a high level, a divergent design allows each replica 
to have a different index configuration, tailored to 
a particular subset of the workload.
To evaluate the query workload $Q$,
an ideal strategy would route each $q \in Q$ 
to the replica that minimizes 
the execution cost for $q$. 
However, this ideal routing may not be feasible for several reasons, 
e.g., the replica may not be reachable or may be overloaded. 
Hence, the idea is to have {\it several}
low-cost replicas for $q$, so as to provide some flexibility for
query evaluation.
For this purpose, we introduce a parameter $m \in [1, N]$, which
we term {\it routing multiplicity factor}. 
Informally, for every query $q \in Q$,
a divergent design specifies a set of $m$
low-cost replicas that $q$ can be routed to.
The value of $m$
is assumed to be set by the administrator
who is responsible for tuning the system:
$m=1$ leads to a design that favors specialization;
$m=N$ provides for maximum flexibility;
$1 < m < N$ achieves some trade-off between the two extremes.

%%%%
Formally, we define a divergent design as 
a pair $(\configuration, \routing)$. 
The first component $\configuration = (\confcomponent{1}, \dots, \confcomponent{N})$ 
is an $N$-tuple, where $I_r$ is the index configuration of replica $r \in [1,N]$.
The second component $\routing = (\rcomponent{0}, \rcomponent{1}, \cdots, 
\rcomponent{N})$
is a $(N+1)$-tuple of {\it routing functions}. Specifically, 
$\route{0}()$ is a function
over queries such that
$\route{0}(q)$ specifies the set of $m$ replicas to which 
$q$ can be routed when all replicas are operational (i.e., there are no failures). 
Intuitively, $\route{0}(q)$ 
indicates the replicas that can evaluate $q$ 
at low cost while respecting other constraints 
(e.g., bounding load skew among replicas, which we discus later), 
and is meant to serve as a hint to the runtime query scheduler.
Therefore, a key requirement is that $\route{0}()$ 
can be evaluated on any query $q$ and not just the queries in the training workload.
The remaining functions $\route{1}, \dots, \route{N}$ have a similar functionality 
but cover the case when replicas fail: 
$\route{j}()$, for $j \in [1,N]$, 
specifies how to route each query when replica $j$ has failed and is not reachable. Notice that in this case there may be fewer than $m$ replicas in $\route{j}(q)$ for any $q \in Q$ if the DBA has originally specified $m=N$.

%%%
In order to quantify the goodness of a divergent design,
we first use a metric that
captures the performance of the workload
under the normal operation when no running replica fails as follows.
%%%
\begin{eqnarray*}
	\totalcost(\configuration, \routing) & = & \sum_{q \in Q} \sum_{r \in \rcomponent{0}(q)} 
				\frac{f(q)}{m} \cost(q, \confcomponent{r}) + \\
	& & \sum_{ u \in U} \sum_{i \in [1,N]} f(u) \cost(u, \confcomponent{i})
\end{eqnarray*}
%%%
\noindent
The second term simply captures the cost 
to propagate each update $u \in U$ to each replica in the system. 
The first summation captures the cost to evaluate the query workload $Q$.
We assume that $q$ is routed uniformly among its $m$ replicas in $\rcomponent{0}(q)$, 
and hence the weight of $q$ is scaled by $1/m$ for each replica. 
The intuition behind the $\totalcost(\configuration, \routing)$ 
metric is that it captures the ability of the divergent design 
to achieve both replica specialization and flexibility in load balancing
with respect to $m$. 

%%%
To capture the case of failures, we define $\ftotalcost(\configuration, \routing,j)$ as the performance of the workload when replica~$j \in [1,N]$ fails:
%%%
\begin{eqnarray*}
        \ftotalcost(\configuration, \routing,j) & = & \sum_{q \in Q} \sum_{r \in \route{j}(q)} 
                                \frac{f(q)}{\max\{m,N-1\}} \cost(q, \confcomponent{r}) + \\
        & & \sum_{ u \in U} \sum_{i \in \{ 1, \cdots, N \} - \{ j \} } f(u) \cost(u, \confcomponent{i})
\end{eqnarray*}
%%%
The expression for $\ftotalcost(\configuration, \routing, j)$
is similar to $\totalcost(\configuration, \routing)$, except that, since replica $j$ is unavailable,  the update cost on replica $j$ is discarded and routing function $\route{j}$ is used
instead of $\route{0}$.

%%%
We quantify the goodness of a divergent design $(\configuration, \mapping)$
based on the {\it expected cost} of the workload, denoted as $\expected(\configuration,\mapping)$, by combining $\totalcost(\configuration,\mapping)$ and $\ftotalcost(\configuration,\mapping,j)$ weighted appropriately. Recall that $\alpha$ is a DBA-specified probability that a failure will occur. It follows that $(1-\alpha)$ is the probability that all replicas are operational and hence the performance of the workload is computed by $\totalcost(\configuration,\mapping)$. Conversely, the probability of a specific replica $j$ failing is $\alpha/N$, assuming that all replicas can fail independently with the same probability. In that case, the cost of workload evaluation is $\ftotalcost(\configuration,\mapping,j)$. Putting everything together, we obtain the following definition for the expected workload cost:
\begin{eqnarray*}
	\expected(\configuration,\mapping) & = & (1 - \alpha) \cdot \totalcost( \configuration, \routing) + \\
		 & &	   \sum_{j \in [1, N]} \frac{\alpha}{N} \ftotalcost(\configuration, \routing,j)
\end{eqnarray*}
\noindent
Our assumption so far is that at most one replica can be inoperational at any point in time. The extension to concurrent failures is straightforward. All that is needed is extending $\mapping$ with routing functions for combinations of failed replicas, and then extending the expression of $\expected(\configuration,\mapping)$ with the corresponding cost terms and associated probabilities.

%%%%
We are now ready to formally define the problem of {\bf D}ivergent {\bf D}esign {\bf T}uning, 
referred to as \divergent. 

%%%
\begin{problem}(Divergent Design Tuning - \divergent) \label{prob:div}
We are given a replicated database with $N$ replicas, a workload $\workload = Q \ \cup \ U$, 
a candidate index-set $\Iset$, a set of constraints $C$, a routing multiplicity factor $m$, 
and a probability of failure $\alpha$. 
The goal is to compute a divergent design $(\configuration, \mapping)$ that employs indexes in $\Iset$, satisfies the constraints in $C$, and $\expected(\configuration,\mapping)$ is minimal among all feasible divergent designs. 
\eop
\end{problem}

\stitle{Constraints in \divergent.} The set of constraints $C$ enables the DBA to control the space of divergent designs considered by the advisor. An \emph{intra-replica} constraint specifies some desired property that is local to a replica. Examples include the following:
\begin{itemize}
	\item The size of $I_j$ in $\configuration$ is within a storage-space budget.
	\item Indexes in $I_j$ must have specific properties, e.g., no index can be more than 5-columns wide, or the count of multi-key indexes is below a limit.
	\item The cost to update the indexes in $I_j$ is below a threshold.
\end{itemize}
Conversely, an \emph{inter-replica} constraint specifies some property that involves all the replicas. Examples include the following:
\begin{itemize}
	\item If $(\configuration_c,\mapping_c)$ represents the current divergent design of the system, then $\expected(\configuration,\mapping)$ must improve on $\expected(\configuration_c,\mapping_c)$ by at least some percentage.
	\item The total cost to materialize $(\configuration,\mapping)$ (i.e., to build each $I_j$ in each replica) must be below some threshold.
	\item The load skew among replicas must be below some threshold. (We discuss this constraint in more detail shortly.)
\end{itemize}
We will formalize later the precise class of constraints $C$ that we can support in $\rita$. The goal is to provide support for a large class of practical constraints, while retaining the ability to find effective designs efficiently. 

Bounding load skew is a particularly important inter-replica constraint that we examine in our work. The replica-specialization imposed by a divergent design means that each replica may receive a different subset of the workload, and hence a different load. The $\expected()$ metric does not take into account these different loads, which means that minimizing workload cost may actually lead to a high skew in terms of load distribution. Our experiments verify this conjecture, showing that an optimal divergent design in terms of $\expected()$ can cause loads at different replicas to differ by up to a factor of two. This situation, which is clearly detrimental for good performance in a distributed setting, can be avoided by including in $C$ a constraint on the load skew among replicas. 
More concretely, the load of replica $j$ under normal operation can be computed as: 
%%%%
\begin{equation*}
	\load(\configuration, \mapping, j) =  \sum_{q \in Q \land\ j \in \route{0}(q)} 
				\frac{f(q)}{m} \cost(q, \confcomponent{j}) + 
				 \sum_{ u \in U} f(u) \cost(u, \confcomponent{j})
\end{equation*}
%%%%
We say that design $(\configuration, \mapping)$ has \emph{load skew} 
${\nodefactor \ge 0 }$ 
if and only if $\load( \configuration,\mapping, r) \le (1+\nodefactor) \cdot \load(\configuration,\mapping, j)$
for any $1 \le r \ne j \le N$. 
A low value is desirable for $\nodefactor$, as
it implies that $(\configuration, \mapping)$ keeps the different replicas relatively balanced. 

%%%
We can define a load-skew constraint for the case of failures in exactly the same way. 
Specifically,
we define $\loadfail(\configuration, \routing, j, f)$ as the load of replica $j$
when replica $f$ fails.
The formula of $\loadfail(\configuration, \routing, j, f)$ 
is similar to that of $\load(\configuration, \routing, j)$
except that $\rcomponent{0}$ is replaced by $\rcomponent{f}$. The constraint then specifies that $\loadfail(\configuration,j,f) \le (1+\nodefactor')\loadfail(\configuration,\routing,r,f)$ for any valid choice of $j,r,f$ and a skew factor $\nodefactor' \ge 0$. 

It is straightforward to verify that zero skew is always possible by assigning the same index configuration to each replica. One may ask whether there is a tradeoff between specialization (and hence overall performance) and a low skew factor. One of the contributions of our work is to show that this is \emph{not} the case, i.e., it is possible to compute divergent designs that exhibit both good performance and a low skew factor. 

%%%
\stitle{Theoretical Analysis.} Computing the optimal divergent design implies computing a partitioning of the workload to replicas and an optimal index configuration per replica. Not surprisingly, the problem is computationally hard, as formalized in the following theorem.
The proof is provided in Appendix~\ref{app:hard}.  

\begin{theorem}\label{theorem:hard}
It is not possible to compute an optimal solution to $\divergent$ in polynomial time unless $P = NP$. 
\end{theorem}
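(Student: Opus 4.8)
The plan is to establish hardness by exhibiting a polynomial-time reduction from a classical $NP$-complete problem to the decision version of $\divergent$, namely: given a target value $k$, does there exist a feasible divergent design $(\configuration,\mapping)$ with $\expected(\configuration,\mapping) \le k$? A polynomial-time algorithm for the optimization problem would immediately decide this question, so it suffices to prove the decision version $NP$-hard. The key observation is that $\divergent$ already contains the classical single-node index-selection problem as the special case $N=1$, $\alpha=0$, $m=1$, with $C$ consisting of a single storage budget $B$: then $\mapping$ is trivial, the $\ftotalcost$ terms vanish, and $\expected(\configuration,\mapping)$ collapses to $\sum_{q\in Q} f(q)\cost(q,\confcomponent{1}) + \sum_{u\in U} f(u)\cost(u,\confcomponent{1})$ subject to $\mathit{size}(\confcomponent{1}) \le B$, which is exactly index selection. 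Since that problem is known to be $NP$-hard, the theorem follows; below I give a self-contained reduction for completeness.

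I would reduce from (weighted) \textsc{Set Cover}: given a universe $\mathcal{U}=\{e_1,\dots,e_n\}$, sets $S_1,\dots,S_\ell \subseteq \mathcal{U}$, and a bound $B$, decide whether $B$ of the sets cover $\mathcal{U}$. Build a $\divergent$ instance with one table, $N=1$, $\alpha=0$, $m=1$, no updates, a candidate index $a_i \in \Iset$ of unit size for each $S_i$, and the constraint $C = \{\,\mathit{size}(\confcomponent{1}) \le B\,\}$. Construct one query $q_j$ per element $e_j$ with unit weight, designed so that $\cost(q_j,X)=1$ when $X$ contains some $a_i$ with $e_j\in S_i$ and $\cost(q_j,X)=2$ otherwise; the technical heart of the reduction is to realize this cost structure with an actual schema and SQL workload so that the what-if optimizer returns these estimates (a standard but delicate construction: a single wide table with a Boolean column per set, and range predicates chosen so that the cheapest plan for $q_j$ uses an index on some ``useful'' column). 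Then any design materializing at most $B$ indexes has $\expected(\configuration,\mapping) = n$ precisely when the chosen indexes correspond to a set cover, and $\expected(\configuration,\mapping) \ge n+1$ otherwise; hence the optimal design has cost $n$ if and only if the \textsc{Set Cover} instance is a yes-instance. Preservation of feasibility in both directions and polynomiality of the construction are routine.

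The main obstacle, as flagged above, is the concrete realization of this combinatorial cost function inside an executable optimizer model: one must pin down a schema, data distribution, and query predicates for which the optimizer provably prefers the intended index-based plans in exactly the intended cases, never discovers an unexpected cheaper plan, and produces numeric estimates that separate the yes- and no-cases across the threshold. If one is willing to treat $\cost(\cdot,\cdot)$ as an abstract oracle --- consistent with the way it is introduced in Section~\ref{sec:problem} --- this difficulty disappears and the reduction is immediate; alternatively one can bypass the construction entirely by citing the established $NP$-hardness of single-node index selection together with the special-case embedding from the first paragraph. I would also remark, for flavor, that $\divergent$ is hard along a second axis as well --- even with the per-replica index configurations fixed, choosing a workload partition of minimum cost subject to a load-skew bound encodes a partition/scheduling problem --- but this is not needed for the theorem as stated.
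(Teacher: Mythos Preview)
Your proposal is correct but follows a different route from the paper. The paper reduces from the \emph{multi-replica} divergent-design problem of~\cite{Divergent2012}: it fixes $\alpha=0$ and restricts $C$ to a per-replica space budget, then proves (Lemma~\ref{lemma:equivalance}) that in this setting the optimal routing $\rcomponent{0}(q)$ necessarily picks the $m$ cheapest replicas for each $q$, which recovers exactly the original problem statement and hence its known NP-hardness. Your reduction instead collapses all the way to the \emph{single-node} case $N=m=1$, embedding classical index selection (or \textsc{Set Cover} directly) as the hard core. Your route is more elementary---no auxiliary lemma about optimal routing is needed, because with one replica the routing is forced---and it sidesteps any dependence on~\cite{Divergent2012}. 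The paper's route, on the other hand, establishes hardness already in the genuinely replicated regime and, via the lemma, yields the extra structural fact that relaxing the ``top-$m$ replicas'' restriction does not change the optimum when only a space budget is imposed; that fact has independent interest but is not required for the theorem as stated. Your caveat about realizing the combinatorial cost function inside an actual optimizer is well taken; the paper avoids this by citing the prior hardness result rather than building the gadget from scratch, which is also the cleaner of the two options you mention.
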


\subsection{Comparison to Original Study~\cite{Divergent2012}}
\label{sec:compare}

The formulation of \divergent\ expands on the original problem statement in~\cite{Divergent2012} in several non-trivial ways. First, \divergent\ incorporates the expected cost under the case of failures into the objective function, whereas failures were completely ignored in~\cite{Divergent2012}. Second, our formulation allows a much richer set of constraints $C$ compared to the original study which considered solely intra-replica constraints. As discussed earlier, the omission of such constraints may lead to divergent designs with undesirable effects on the overall system, e.g., the load skew issue that we discussed earlier. Finally, the original problem statement imposed a restriction for $\rcomponent{0}(q)$ to correspond to the $m$ replicas with the least evaluation cost for $q$, that is, $\forall q \in Q$ and $\forall i, j \in [1,N]$ such that $i \in \rcomponent{0}(q)$ and $j \notin \rcomponent{0}(q)$ it must be that  $cost(q, \confcomponent{i}) \leq cost(q, \confcomponent{j})$. We remove this restriction in our formulation in order to explore a larger space of divergent designs, which is particularly important in light of the richer class of constraints that we consider.

%%%%
\section{Divergent Design Tuning as Binary Integer Programming}
\label{sec:div-bip}
%%%%%

In this section, we show that the problem of Divergent Design Tuning ($\divergent$)
can be reduced to a {\it Binary Integer Program (BIP)} that contains a relatively small number of variables. The implication is that we can leverage several decades of research in linear-optimization solvers in order to efficiently compute near-optimal divergent designs. Reliance on these off-the-shelf solvers brings other important benefits as well, e.g., simpler implementation and higher portability of the index advisor, or the ability to operate in ``any-time'' mode where the DBA can interrupt the tuning session at any point in time and obtain the best design computed thus far. We discuss these features in more detail in Section~\ref{sec:tool}, when we describe the architecture of $\rita$. 

The remainder of the section presents the technical details of the reduction. We first review some basic concepts for \emph{fast what-if optimization}, which forms the basis for the development of our results. We then present the reduction for a simple variant of $\divergent$ and then generalize to the full problem statement.

%%%
\subsection{Fast What-If Optimization} 
\label{sec:inum}
%%%%

What-if optimization is a principled method to estimate $\cost(q,X)$ and $\cost(u,X)$ for any $q\in Q$, $u \in U$ and index set $X$, but it remains an expensive operation that can easily 
become the bottleneck in any index-tuning tool. 
To mitigate the high overhead of what-if optimization, 
recent studies have developed two techniques for 
fast what-if optimization, termed INUM~\cite{Papadomanolakis2007} and C-PQO~\cite{Bruno2008} respectively, that can be used as 
drop-in replacements for a what-if optimizer. 
In what follows, we focus on INUM but note
that the same principles apply for C-PQO.

We first introduce some necessary notation. A configuration $A \subseteq \Iset$ is called {\it atomic}~\cite{Papadomanolakis2007} 
if $A$ contains at most one index from each $\Iset_i$. 
We represent $A$ as a vector with $n$ elements, 
where $A[i]$ is an index from $\Iset_i$ or 
a symbol $\noindex{i}$ indicating that no index of $\Iset_i$ is selected. 
For an arbitrary index set $X$,  
we use $atom(X)$ to denote the set of atomic configurations 
in $X$. To simplify presentation, 
we assume that a query $q$ references a specific table $T_i$ 
with at most one tuple variable. 
The extension to the general case is straightforward at the expense of complicated notation.

For each query $q$, INUM makes a few carefully selected calls to the what-if optimizer 
in order to compute a set of \emph{template plans}, denoted as $\tplans(q)$. A template plan $p \in \tplans(q)$ is a physical plan for $q$ except
that all access methods (i.e., the leaf nodes of the plan)
are substituted by ``slots''. Given a template $p \in \tplans(q)$ and an atomic index configuration $A$, 
we can instantiate a concrete physical execution plan by instantiating each slot with the corresponding index in $A$, 
or a sequential scan if $A$ does not prescribe an index for the corresponding relation. Figure~\ref{fig:inum} shows an example of this process for a simple query \add{over three tables $T_1$, $T_2$, and $T_3$}, 
and an atomic configuration that specifies \add{an index on $T_1$ and another index on $T_3$}. 
Each template is also associated with an {\it internal plan cost},  
which is the sum of the costs of the operators in this plan except the access methods. 
Given an atomic configuration $A$, the cost of the instantiated plan, denoted as $\cost(p,A)$, is the sum of the internal plan cost and the cost of the instantiated access methods.

The intuition is that $\tplans(q)$ represents the possibilities for the optimal plan
of $q$ depending on the set of materialized indexes. Hence, given a hypothetical index
configuration $X$, INUM estimates $\cost(q,X)$ as the minimum $\cost(p,A)$ over $p \in
\tplans(q)$ and $A \in \atom(X)$. Note that a slot in $p$ may have restrictions on its
sorted order, e.g., the template plan in Figure~\ref{fig:inum} prescribes that the
slot for $T_1$ must be accessed in sorted order of attribute $x$. If $A$ does not
provide a suitable access method that respects this sorted order, then $\cost(p,A)$ is
set to $\infty$. INUM guarantees that there is at least one plan $p$ in $\tplans(q)$
such that $\cost(p,A) < \infty$ for any $A \in \atom(X)$. As shown in the original
study~\cite{Papadomanolakis2007}, INUM provides an accurate approximation for the
purpose of index tuning, and is orders-of-magnitude faster compared to conventional
what-if optimization.

\begin{figure}
	\centering
	\includegraphics[width=0.45\textwidth]{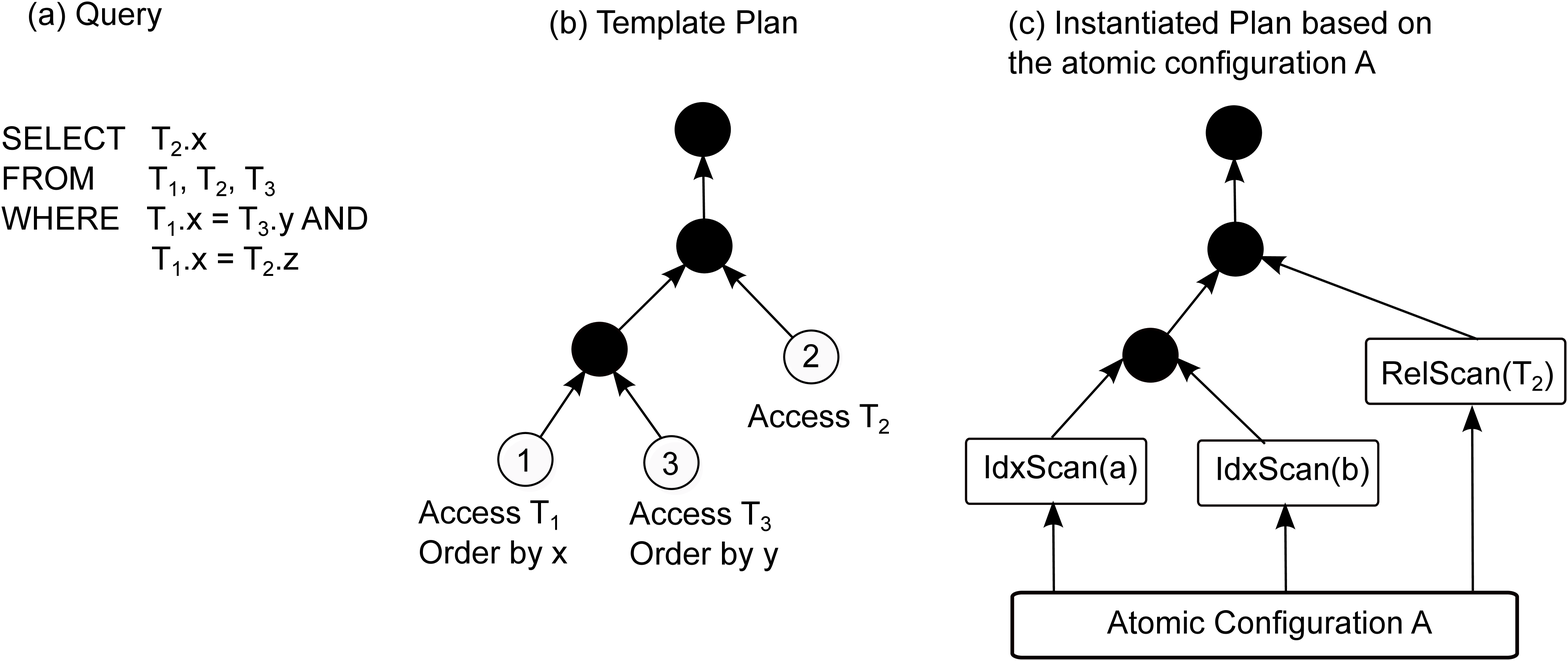}
	\caption{\add{Example of template plans and instantiated plans.
		The configuration $A$
		has the following contents:
		$A[1]= a$, an index with key $T_1.x$;
		$A[2] = \noindex{2}$;
		$A[3] = b$, an index with key $(T_2.x, T_2.w)$~\cite{Dash2011}}}
	\label{fig:inum}
\end{figure}    

%%%
\stitle{Linear composability.}
The approximation provided by INUM and C-PQO can be formalized in terms of a property 
that is termed {\it linear composability} in~\cite{Dash2011}.

%%%
\begin{definition}[Linear composability~\cite{Dash2011}] \label{def:linear}
Function $\cost()$ is linearly composable for a select-statement $q$
if there exists a set of identifiers $K_q$ and constants
$\beta_{p}$ and $\gamma_{pa}$ for $p \in  K_q$,
$a \in \Iset \cup \{ \noindex{1} \} \cup \cdots \cup \{ \noindex{n} \}$ such that:
\[cost(q, X) = min\{ \beta_{p} + 
                \sum_{a \in A} \gamma_{pa}, p \in K_q, A \in \atom(X)\}
       \] for any configuration X.
Function $cost()$ is linearly composable for an update-statement $q$
if it is linearly composable for its query shell.  \eop
\end{definition}

%%%
It has been shown in~\cite{Dash2011}
that both INUM and C-PQO compute a cost function that is linearly composable. 
For INUM, $K_q = \tplans(q)$
and each $p$ corresponds to a distinct template plan in $\tplans(q)$.
Here, we use $\tplans(q)$
for the set of identifiers and overload $p \in \tplans(q)$
to represent an identifier.
In turn, the expression $\beta_{p} +  \sum_{} \gamma_{pa}$
corresponds to $cost(p,A)$,
where $\beta_{p}$ denotes the internal plan cost of $p$,
and $\gamma_{pa}$ 
is the cost of implementing the corresponding slot in $p$ using index $a$.
(The slot covers the relation on which the index is defined.)
Note that linear composability does not imply a linear cost model for the query optimizer
-- non-linearities are simply hidden inside the constants $\beta_{qp}$.

For the remainder of the paper, we assume that $\cost(q,X)$ is computed by either INUM or C-PQO (for the purpose of fast what-if optimization) and hence respects linear composability.

%%%
\subsection{Basic DDT}
\label{sec:basic}
%%%%

In this subsection, we discuss how to reduce $\divergent$ to a compact BIP for the case when $\alpha = 0$, $C = \emptyset$ (i.e., no failures and no constraints) and the workload comprises solely queries, i.e., $W = Q$. This reduction forms the basis for generalizing to the full problem statement, which we discuss later.

%%%%%%%%%%%%%%%%%%%%%%%%%%
\begin{figure}[t]
{\small
  Minimize:   
  $\biptotalcost(\configuration, \routing)
  				= \bipquerycost (\configuration, \routing) 
  				\boxed{+
  				\bipupdatecost (\configuration, \routing)}$, 
	\\ where:
  	%%%
	  \begin{equation*} \label{eq:query-cost}	  
		\bipquerycost (\configuration, \routing) 
				 = \sum_{ q \in Q} \sum_{r \in [1,N]} 
					 \frac{f(q)}{m} \costbip(q, r)			
		\end{equation*}					 
		%%%
		%%%
	\begin{empheq}[box=\fbox]{align*}
		%%%		
		\bipupdatecost (\configuration, \routing) 
			& = & \sum_{ q \in Q_{upd}} \sum_{r \in [1,N]} 							f(q) \costbip(q, r)	  \\
		%%%
				 & + &
		\sum_{ u \in U} \sum_{r \in [1,N]} f(u) s^{r}_a \cdot ucost(u,a)
		\end{empheq}
%%%
 \begin{equation} \label{eq:cost-formula}
	\costbip(q, r) = 
		\sum_{p \in \TPlans{q}}\beta_{p}y^{\rep}_{p} + 
		%%%
		\sum_{\substack{p \in \TPlans{q} \\ a \in \Iset \cup 
			\{ \noindextiny{1} \} \cup \cdots \cup \{ \noindextiny{n} \}
			}}
		\gamma_{pa}x^{\rep}_{pa}, 		
		\substack{\forall r \in [1, N],  \\ \forall q \in Q \cup Q_{upd}}
	  %%%				 
 \end{equation}
 such that:	
 %%%%%%%%%%%%%%%%%%%%%
	\begin{equation}
		\sum_{\rep \in [1, N]} t^r_q = m, \forall q \in Q
		\label{eq:div-top-m-q}		
	\end{equation}
 %%%%%%%%%%%%%%%%%%%%%
	\begin{empheq}[box=\fbox]{equation}
		\sum_{\rep \in [1, N]} t^r_q = N, \forall q \in Q_{upd}
		\label{eq:div-top-m-upd}		
	\end{empheq}
 %%%%%%%%%%%
	\begin{equation}
		\sum_{p \in \TPlans{q}} y^{r}_{p} = t^r_q,\ \ \ \forall q \in Q \cup Q_{upd}
		 \label{eq:div-one-template} 		
	\end{equation}
  %%%%%%%%%%%%%%%%%%%%%
  \begin{equation} \label{eq:recommend-index}
  	s^{\rep}_a  \geq x^{\rep}_{pa},
		\ \ \ \forall q \in Q \cup Q_{upd}, 
			 p \in \TPlans{q}, \  a \in \Iset
  \end{equation}
	%%%%%%%%%%%	
 \begin{equation} \label{eq:div-atomic}
     \sum_{a \in \Iset_i \cup \{ \noindextiny{i} \} }x^{\rep}_{pa} = y^{\rep}_{p}, 
		\ \ \substack{ \forall q \in Q \cup Q_{upd},  p \in \TPlans{q}, \\
						  i \in [1,n], \ T_i \mbox{ is referenced in q} 
					 }	 
 \end{equation} 
 
}
\caption{The BIP for Divergent Design Tuning. \label{fig:bip_basic}}
\end{figure}

%%%%
\stitle{BIP formulation.} At a high level, we are given an instance of $\divergent$ and we wish to construct a BIP whose solution provides an optimal divergent design. This reduction will hinge upon the linear composability property, i.e., we assume that each query $q \in W$ 
has been preprocessed with INUM and therefore we can approximate 
cost$(q,X)$ for any $X \subseteq \Iset$ 
as expressed in Definition~\ref{def:linear}. 

%%%
Figure~\ref{fig:bip_basic} shows the constructed BIP. (Ignore for now the boxed expressions.) In what follows, we will explain the different components of the BIP and also formally state its correctness. The BIP uses two sets of binary variables to encode the choice for a divergent design $(\configuration,\routing)$: 
\begin{itemize}
	\item Variable $s^r_a$ is set to $1$ if and only if index $a$
	is part of the index design $I_r$ on replica $r$. In other words, $I_r = \{ a \ | \ s^r_a = 1 \}$.
	
	\item Variable $t^r_q$ is set to $1$ if and only if query $q$ is routed to replica $r$, i.e., $r \in \route{0}(q)$. (Recall that we ignore failures for now.) In other words, $\route{0}(q) = \{ r \ | \ t^r_q =1 \}$. 
\end{itemize}
Under our assumption of using fast what-if optimization, the cost of a query $q$ in some replica $r$ can be expressed as $\cost(q,I_r) = \cost(p',A')$ for some choice of $p' \in \tplans(q)$ and an atomic configuration $A' \in \atom(I_r)$. To encode these two choices, we introduce two different sets of binary variables:
\begin{itemize}
	\item Variable $x^r_{pa}$, where $p$ is a template in $\tplans(q)$ and $a$ is an index in 
$\Iset$$\cup \{ \noindex{1} \} \cup \cdots \cup \{ \noindex{n} \}$, 
is equal to $1$ if and only if $p=p'$ and $a \in A'$.
	\item Variable $y^r_p = 1$ if and only if $p=p'$. 
\end{itemize} 
The BIP specifies several constraints that govern the valid value assignments to the aforementioned variables:
\begin{itemize}
	\item Constraint~(\ref{eq:div-top-m-q}) specifies that query $q$ must be routed to exactly $m$ replicas. 
	\item Constraint~(\ref{eq:div-one-template}) specifies that there must be exactly one variable $y^r_p$ set to $1$ if $t^r_q=1$, i.e., exactly one template $p$ chosen for computing $\cost(q,I_r)$ if $q$ is routed to $r$. Conversely, $y^r_p=0$ for all templates $p$ if $t^r_q = 0$. 
	\item Constraint~(\ref{eq:recommend-index}) specifies that an index $a$ can be used in instantiating a template $p$ at replica $r$ only if it appears in the corresponding design $I_r$.
	\item Constraint~(\ref{eq:div-atomic}) specifies that if $y^r_p=1$, i.e., $p$ is used to compute $\cost(q,I_r)$, then there must be exactly one access method $a$ per slot such that $x^r_{pa}=1$. Essentially, the choices of $a$ for which $x^r_{pa}=1$ must correspond to an atomic configuration. Conversely, $x^r_{pa}=0$ for all $a$ if $y^r_p=0$.
\end{itemize}

Given these variables, we can express $\cost(q,I_r)$ as in Equation~\ref{eq:cost-formula} in Figure~\ref{fig:bip_basic}. The equation is a restatement of linear composability (Definition~\ref{def:linear}) by translating the minimization to a guarded summation using the 
binary variables $y^r_p$ and $x^r_{pa}$. 
Specifically, if $t^r_q=1$, then constraint~\eqref{eq:div-one-template} 
forces the solver to pick exactly one $p$ such that $y^r_p=1$, and constraint~\eqref{eq:div-atomic} forces setting 
$x^r_{pa}=1$ for the same choice of $p$ and corresponding to an atomic configuration. Hence, minimizing the expression in Equation~\ref{eq:cost-formula} corresponds to computing $\cost(q,I_r)$. Otherwise, if $t^r_q=0$, then the same constraints force $\cost(q,I_r)=0$. In turn, it follows that the objective function of the BIP corresponds to $\TotalCost(\configuration,\routing)$.

%%%
\stitle{Handling update statements.}
The total cost to execute update
statements, $\updatecost(\configuration, \routing)$, 
includes two terms, 
as shown in the second boxed expression in Figure~\ref{fig:bip_basic}.
Here, $Q_{upd}$ denotes the set of all the query-shells,
each of which corresponds to each update statement in $U$. 
The first component of $\updatecost()$
is the total cost
to evaluate every query-shell in $Q_{upd}$
at every replica.
This component is expressed as the summation of $\costbip(q, r)$
for all $q_{sel} \in Q_{upd}$ and $r \in [1,N]$
in our BIP. 
Since each query-shell needs to be routed to all replicas,
we impose the constraint~\eqref{eq:div-top-m-upd}.

%%%
The second component of $\updatecost()$ is
the total cost to update the affected indexes.
Using variable $s_a^r$ that tracks the selection
of an index at replica $r$ in the recommended
configuration, the cost of updating an index $a$ 
at replica $r$
given the presence
of an update statement $u$ is computed as the product of $s^r_a$
and $ucost(u,a)$.

%%%
\stitle{Correctness.} Up to this point, we argued informally about the correctness of the BIP. The following theorem formally states this property. 
The proof is given in Appendix~\ref{app:div-equivalent}.

\begin{theorem}\label{theorem:div-correctness}
A solution to the BIP in Figure~\ref{fig:bip_basic} corresponds to the optimal divergent design for $\divergent$ when $\alpha=0$ and $C=\emptyset$. 
\end{theorem}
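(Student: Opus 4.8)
The plan is to establish a bijection between feasible divergent designs $(\configuration, \routing)$ for the restricted $\divergent$ instance (with $\alpha = 0$, $C = \emptyset$, $W = Q$, plus update statements handled via query shells) and feasible $0/1$ assignments to the BIP variables $\{s^r_a, t^r_q, y^r_p, x^r_{pa}\}$, such that the BIP objective $\biptotalcost(\configuration,\routing)$ evaluated at the image equals $\totalcost(\configuration,\routing)$. Correctness of the BIP then follows because a minimizer on one side maps to a minimizer on the other. First I would handle the direction \emph{design $\to$ assignment}: given $(\configuration,\routing)$, set $s^r_a = 1$ iff $a \in \confcomponent{r}$, and $t^r_q = 1$ iff $r \in \route{0}(q)$. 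For each $q$ routed to $r$, let $(p', A')$ be a pair attaining the minimum in the linear-composability expression of $\cost(q, \confcomponent{r})$ with $A' \in \atom(\confcomponent{r})$ (INUM guarantees such a pair with finite cost exists); set $y^r_p = 1$ iff $p = p'$, and $x^r_{pa} = 1$ iff $p = p'$ and $a \in A'$. One then checks constraints~\eqref{eq:div-top-m-q}--\eqref{eq:div-atomic} hold: \eqref{eq:div-top-m-q} and \eqref{eq:div-top-m-upd} because $|\route{0}(q)| = m$ (resp.\ $N$ for update shells); \eqref{eq:div-one-template} because exactly one template is picked per routed replica; \eqref{eq:recommend-index} because $A' \subseteq \confcomponent{r}$; and \eqref{eq:div-atomic} because $A'$ is atomic, hence contributes exactly one access method (possibly $\noindex{i}$) per referenced table. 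Evaluating $\costbip(q,r)$ from Equation~\eqref{eq:cost-formula} on this assignment yields exactly $\beta_{p'} + \sum_{a \in A'} \gamma_{p'a} = \cost(q,\confcomponent{r})$ when $t^r_q = 1$, and $0$ when $t^r_q = 0$; summing with the weights $f(q)/m$ (resp.\ $f(u)$, plus $\sum_a s^r_a \cdot ucost(u,a)$ for the maintenance term) reproduces $\totalcost(\configuration,\routing)$.

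Next I would handle the converse direction \emph{assignment $\to$ design}: given any feasible $0/1$ assignment, define $\confcomponent{r} = \{a \mid s^r_a = 1\}$ and $\route{0}(q) = \{r \mid t^r_q = 1\}$, which is a valid routing of multiplicity $m$ by~\eqref{eq:div-top-m-q}. The key sub-claim is that for each $q$ with $t^r_q = 1$, the constraints force the guarded sum in~\eqref{eq:cost-formula} to equal $\cost(p', A')$ for some $p' \in \tplans(q)$ and some atomic $A' \subseteq \confcomponent{r}$: constraint~\eqref{eq:div-one-template} isolates a unique $p'$ with $y^r_{p'} = 1$, \eqref{eq:div-atomic} then forces exactly one $x^r_{p'a} = 1$ per referenced relation (so the chosen $a$'s form an atomic configuration $A'$), and \eqref{eq:recommend-index} guarantees $A' \cap \Iset \subseteq \confcomponent{r}$, i.e.\ $A' \in \atom(\confcomponent{r})$. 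Hence $\costbip(q,r) = \cost(p',A') \ge \cost(q,\confcomponent{r})$ by linear composability. This shows $\biptotalcost(\configuration,\routing) \ge \totalcost(\configuration,\routing)$ for the design extracted from the assignment — an \emph{inequality}, not equality, which is actually all that is needed: combined with the first direction, the BIP optimum is both $\le$ (via some design-induced assignment achieving $\totalcost$ of an optimal design) and $\ge$ (any BIP-optimal assignment yields a design whose $\totalcost$ is no larger) the optimal $\totalcost$, so the extracted design is optimal.

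The main obstacle is making precise the interplay between the minimization hidden in linear composability and the constraint that the solver is \emph{free} to pick a suboptimal $(p, A)$ pair for a routed query. The point to argue carefully is that a BIP-optimal solution will never do so: if it picked $(p, A)$ with $\cost(p,A) > \cost(q, \confcomponent{r})$ for some routed $q$, one could re-route the $y$ and $x$ variables to the minimizing pair (which is still an atomic subset of $\confcomponent{r}$, leaving $s^r_a$ untouched and all constraints satisfied) and strictly decrease the objective, contradicting optimality. I would package this as a short exchange argument. A secondary subtlety worth a sentence is the $\noindex{i}$ symbols: they appear in the $x^r_{pa}$ index set and in~\eqref{eq:div-atomic} but not in~\eqref{eq:recommend-index} (no $s$ variable for them) and carry $\gamma_{p,\noindex{i}}$ equal to the sequential-scan cost, so they are always "available" and the atomicity bookkeeping in~\eqref{eq:div-atomic} is what ties the selection of real indexes versus scans together correctly. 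Everything else is routine substitution and rearrangement of finite sums.
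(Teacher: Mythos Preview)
Your proposal is correct and follows essentially the same two-step structure as the paper's proof: first map any divergent design to a feasible BIP assignment achieving the same objective (the paper's Lemma~\ref{lemma:space}), then argue that a BIP-optimal assignment must select the cost-minimizing template/atomic-configuration pair at each routed replica via an exchange/contradiction argument (the paper's Lemma~\ref{lemma:optimal}). Your sandwich phrasing (inequality for arbitrary assignments plus equality for design-induced ones) is a minor reorganization of the same idea, and your explicit handling of the $\noindex{i}$ bookkeeping and the update-shell terms is more detailed than the paper's sketch, which restricts to $W=Q$.
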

%%%

\eat{
\begin{proof} (Sketch)
We prove the theorem in two steps. 
First, we show that every divergent design $(\configuration, \routing)$
corresponds to a value-assignment $\va$
for variables in the BIP 
such that $\va$ satisfies the constraints. This property guarantees that
the solution space of the BIP contains all possible solutions for the
divergent design tuning problem. 
Subsequently, we prove that the optimal assignment $\va^*$ corresponds
to a divergent design.
Combining these two results, we can then conclude the
correctness of the theorem. 
\end{proof}
}

As stated repeatedly, the key property of the BIP is that it contains a relatively small number of variables and constraints, which means that a BIP-solver is likely to find a good solution efficiently. Formally:
\begin{corollary} \label{co:var}
The number of variables and constraints
in the BIP shown in Figure~\ref{fig:bip_basic} is
in the order of $O(N |\workload| |\Iset|)$.
\end{corollary}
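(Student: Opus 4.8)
The plan is to bound the cardinality of each family of variables and each family of constraints appearing in Figure~\ref{fig:bip_basic} separately, and then observe that one family dominates the total. Throughout I will use the identity $|Q \cup Q_{upd}| = |Q| + |U| = |\workload|$ (since $Q_{upd}$ contains exactly one query-shell per update statement in $U$), and the fact that the number of tables $n$ is a fixed parameter of the schema, with $n \le |\Iset|$ whenever each table admits at least one candidate index; either way $n$ is absorbed into the big-$O$.

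First I would handle the variables. There are $N|\Iset|$ variables $s^r_a$ (one per replica/candidate-index pair) and $N|\workload|$ variables $t^r_q$ (one per replica and per element of $Q \cup Q_{upd}$). For the INUM-derived variables $y^r_p$ and $x^r_{pa}$ I would invoke the defining property of fast what-if optimization: for each statement $q$, the identifier set $K_q = \TPlans{q}$ of Definition~\ref{def:linear} has size bounded by a constant $\kappa$ independent of $N$, $|\workload|$, and $|\Iset|$ --- this is precisely the property that makes INUM practical, since it issues only ``a few carefully selected calls'' to the optimizer per query. Hence $\sum_{q \in Q \cup Q_{upd}} |\TPlans{q}| \le \kappa\,|\workload|$, which gives $O(N|\workload|)$ variables $y^r_p$ and $O(N |\workload| (|\Iset| + n)) = O(N|\workload||\Iset|)$ variables $x^r_{pa}$ --- the latter being the dominant family.

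Next I would count constraints in the same fashion: constraints~\eqref{eq:div-top-m-q} and~\eqref{eq:div-top-m-upd} contribute $|Q| + |U| = |\workload|$ instances together; constraint~\eqref{eq:div-one-template} contributes $N|\workload|$ instances; constraint~\eqref{eq:div-atomic} contributes at most $N \kappa |\workload|\, n$ instances (one per replica, per element of $Q \cup Q_{upd}$, per template plan, per referenced table); and constraint~\eqref{eq:recommend-index} contributes at most $N \kappa |\workload|\,|\Iset|$ instances, the dominant family. (Equation~\eqref{eq:cost-formula} is a definition substituted into the objective rather than a standalone constraint, and in any case contributes only $O(N|\workload||\Iset|)$ terms.) Summing the variable counts and the constraint counts and collapsing lower-order terms yields the claimed bound $O(N|\workload||\Iset|)$.

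The only non-routine point --- which I would state explicitly as a modeling assumption rather than derive --- is the bound $|\TPlans{q}| = O(1)$. In the worst case the number of template plans can in principle grow with the number of tables joined in $q$ and the set of interesting orders induced by $q$, so the clean statement relies on the standard assumption inherited from~\cite{Papadomanolakis2007,Dash2011} that this quantity is a small constant; if one instead only assumes $|\TPlans{q}| \le \pi$ for an input-dependent $\pi$, the bound becomes $O(\pi\, N |\workload| |\Iset|)$, which I would note in passing. Everything else is a direct term-by-term count, so I do not anticipate any genuine difficulty beyond bookkeeping.
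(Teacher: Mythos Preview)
Your counting argument is correct and is precisely the direct inspection that the paper leaves implicit --- the corollary is stated without proof immediately after Theorem~\ref{theorem:div-correctness}. Your explicit flagging of the modeling assumption $|\TPlans{q}|=O(1)$ is appropriate and matches the paper's informal stance that INUM issues only ``a few carefully selected calls'' to the optimizer per query; the paper simply absorbs this into the stated bound without comment, so your proposal is, if anything, more careful than the original.
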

In fact, it is possible to eliminate some variables and constraints from the BIP while maintaining its correctness. We do not show this extension since it does not change the order of magnitude for the variable count but it makes the BIP less readable and harder to explain.

%\eat{
%%%
%\stitle{Comparison to COPHY. }
%%%
%The key differences between the BIP formulated
%here and the BIP for COPHY are the usage of variables
%$t^r_q$ to determine which replicas
%that each query $q$ can be routed to
%and the constraints~\eqref{eq:div-top-m-q}
%and~\eqref{eq:div-top-m-upd}
%for the routing constraint.
%As we will present in the subsequent sections,
%we also use variables that have the same meaning with $t^r_q$
%to incorporate replica-failures. 
%In addition, we need to use
%another set of variables in order to implement load-imbalance constraints
%and other constraints as well (e.g., materialization cost constraints). 
%Essentially, the BIP presented here
%is a non-trivial generalization of the BIP 
%formulated in COPHY.
%In particular, we could obtain the BIP for COPHY
%from the BIP given in Figure~\ref{fig:bip_basic}
%by setting $N = 1$ and $m = 1$.
%That is, we consider a special case of \divergent\ 
%when there is only one replica, the probability of failure is $0$
%and there is no load-imbalance constraint, which is equivalent to conventional index-tuning problem. 
%}

%%% BIP for constraints

%%%e
\subsection{Factoring Failures}
\label{sec:failure}
%%%%

To extend the BIP to the case when $\alpha>0$ (i.e., failures are possible), we first introduce additional variables 
$t^{r,j}_q$, $y^{r, j}_{p}$ and $x^{r, j}_{pa}$, for $j \in [1,N]$. 
These variables have the same meaning as their counterparts in Figure~\ref{fig:bip_basic}, 
except that they refer to the case where replica $j$ fails. 
For instance, $t^{r,j}_q=1$ if and only if $q$ is routed to replica $r$ when $j$ fails, i.e., 
$\route{j}(q)=\{r \ | \ t^{r,j}_q=1\}$. 
We augment the BIP with the corresponding constraints as well. 
For instance, we add the constraint $\sum_{r \ne j} t^{r,j}_q = \max\{N-1, m\}$, $\forall q \in Q, j \in [1,N]$ to express the fact that function $\route{j}()$ must respect the routing-multiplicity factor $m$. Finally, we change the objective function to $\expected()$, which is already linear, and express each term $\ftotalcost(\configuration,\routing,j)$ as a summation that involves the new variables. 

The complete details for this extension, including the proof of correctness, can be found in Appendix~\ref{app:failure}. 
We should mention that this extension increases the number of variables and constraints by a factor of $N$ to $O(N^2 |\workload||\Iset|)$, since it becomes necessary to reason about the failure of every replica $j \in [1,N]$. 

%%%
\subsection{Adding Constraints}
\label{sec:constraint}
%%%

In this subsection, we discuss how to extend the BIP when $C\ne \emptyset$, i.e., the DBA specifies constraints for the divergent design. 

Obviously, we can attach to the BIP any type of linear constraint. As it turns out, linear constraints can capture a surprisingly large class of practical constraints. In what follows, we present three examples of how to translate common constraints to linear expressions that be directly added to the BIP. 

%%%
\stitle{Space budget.} Let $size(a)$ denote the estimated size
of an index $a$,
and $b$ be the storage budget at each replica.
Using the variable~$s_a^r$ that tracks the selection
of an index at replica $r$ in the recommended
configuration, the storage constraint can be encoded
as: $\sum_{a \in \Iset} s^r_a size(a) \leq b, \ \forall r \in [1, N]$. 
In general, variables $s^r_a$ can be used to express several types of intra-replica constraints that involve the selected indexes, e.g., bound the total number of multi-key indexes per replica, or bound the total update cost for the indexes in each replica.

%%%
\stitle{Bounding load-skew.} Recall that $\load(\configuration, \routing, j)$ captures the total load of replica $j$ under a divergent design $(\configuration,\routing)$. The load-skew constraint specifies that $\load(\configuration,\routing,j) \le (1+\tau) \load(\configuration,\routing,r)$, for any $r \ne j$, where $\tau$ is the load-skew factor provided by the DBA.

It is straightforward to translate the constraint between two specific replicas $j$ and $r$ 
into a linear inequality, by using variables $x^r_{pa}$ and $y^r_{p}$ to rewrite the corresponding $\load()$ terms as linear sums. 
Specifically, $\load(\configuration,\routing,j)$ can be expressed as a linear sum similarly to $\biptotalcost()$ 
in Figure~\ref{fig:bip_basic}, except that we only consider replica $j$ 
and the queries for which $j \in \route{0}(q)$, and the same goes for expressing $\load(\configuration,\routing,r)$. 

Based on this translation, we can add $N(N-1)$ constraints to the BIP, one for each possible choice of $j$ and $r$. We can actually do better, by observing that we can sort replicas in ascending order of their load, and then impose a single load-skew constraint between the first and last replica. By virtue of the sorted order, the constraint will be satisfied by any other pair of replicas. Specifically, we add the following two constraints to the BIP:
\begin{eqnarray}
     	\load(\configuration, \mapping, i) \leq  \load(\configuration, \mapping, i + 1), \ \forall i \in [1, N-1] \label{eq:imbalance-a}  \label{eq:order} \\
		\load(\configuration, \mapping, N) \leq  (1 + \nodefactor) \cdot \load (\configuration, \mapping, 1) \label{eq:imbalance-b} \label{eq:imbalance}
\end{eqnarray}
This approach requires only $N$ constraints and is thus far more effective. 

The final step requires adding another set of constraints on $\costbip(q,I_r)$. 
This is a subtle technical point that concerns the correctness of the reduction 
when the constraints are infeasible. 
More concretely, the solver may assign variables $y^r_{p}$ and $x^r_{pa}$ 
for some query $q$ so that constraints~\eqref{eq:order}--\eqref{eq:imbalance} 
are satisfied even though this assignment does not correspond to the optimal cost $\cost(q,I_r)$. 
To avoid this situation, we introduce another set of variables that are isomorphic to $x^r_{pa}$ and are used to force a cost-optimal selection for $y^r_p$ and $x^r_{pa}$. The details are given in Appendix~\ref{app:exact-balance}, but the upshot is that we need to add $O(N |\workload||\Iset|)$ additional constraints. 

We have also developed an approximate scheme to handle load-skew constraints in the BIP. The approximate scheme allows the BIP to be solved considerably faster, but the compromise is that the resulting divergent design may not be optimal. However, our experimental results (see Section~\ref{sec:expt}) suggest that the loss in quality is not substantial. The details of the approximate scheme can be found in Appendix~\ref{app:greedy}

\stitle{Materialization cost constraint. }
This constraint specifies that 
the total cost to materialize $(\configuration, \routing)$
must be below some threshold $\Cmat$. 
The materialization cost is computed with respect to the current design $(\configuration_c, \routing_c)$ and takes
into account the cost to scale up or down the current number of replicas, and the cost to create additional indexes or drop redundant indexes in each replica. 

We first consider the case when the number of replicas remains unchanged between $(\configuration,\routing)$ and $(\configuration_c, \routing_c)$. Let us consider a specific replica $r$ and the new design $I_r \in \configuration$. Let $I^c_r \in \configuration_c$ denote the previous design. Clearly, we need to create every index in $I_r - I^c_r$ and to delete every index in $I^c_r - I_r$. Assuming that $\ccost(a)$ and $\dcost(a)$ denote the cost to create and drop index $a$ respectively, we can express the reconfiguration cost for replica $r$ as $\sum_{a \not\in I^c_r} s^r_a \ccost(a) + \sum_{a \in I^c_r}(1-s^r_a)\dcost(a)$. If each replica can install indexes in parallel, then the materialization cost constraint
can be expressed as:
\begin{equation*}
	\sum_{a \in \Iset \wedge a \not \in \Icurrent_r} 
	 s^r_a \ccost(a) + \sum_{a \in \Iset \wedge a \in \Icurrent_r} (1-s^r_a) \dcost(a) \leq  \Cmat,  \forall r \in [1, N]   
\end{equation*}
We can also express a single constraint on the aggregate materialization cost by summing the per-replica costs.

%%%
We next consider the case when the DBA wants to 
shrink the number of replicas to be $\Ndeploy < N$. In this case, the BIP solver should try to find which replicas to maintain and how to adjust their index configurations so that the total materialization cost remains below threshold. For this purpose, we introduce $N$ new binary variables $\deploy^r$ with $r \in [1, N]$
associated with each replica $r$,
where $\deploy^r = 1$
if replica $r$ is kept in the new divergent design, and $\deploy^r = 0$ 
otherwise. The materialization cost can be computed in a similar way as discussed above, except that 
we need to add the following two additional constraints to the BIP.
%First, the constraint $t^r_q \leq z^r, \forall q \in Q\cup Q_{upd}, r \in [1,N]$ is added
%to ensure that we can route queries only to live replicas.
%Second, the constraint $\sum_{r \in [1,N]}z^r = \Ndeploy$ is added
%to restrict the number of live replicas to the desired number.
	\begin{subequations}
		\begin{align}
			t^r_q \leq z^r, \forall q \in Q\cup Q_{upd}, r \in [1,N] \\
			\sum_{r \in [1,N]}z^r = \Ndeploy
		\end{align}
	\end{subequations}
The first constraint ensures that we can route queries only to live replicas.
The second simply restricts  the number of live replicas to the desired number.

Lastly, we consider the case 
when the DBA wants to expand
the number of replicas to be $\Ndeploy > N$.
The set of constraints in the BIP can be re-used 
except that all the variables 
are defined according to $\Ndeploy$ replicas (instead of $N$ replicas as before). 
The materialization cost can also be computed in a similar way.
In addition, we also take into account the cost to deploy the database in new replicas, which appear as constants in the total cost to materialize a design in a new replica.

\eat{
In the following, we propose a greedy scheme 
that trade-offs the quality of the design for the efficiency. 
In our experiments, this solution runs five times faster
than the above exact solution, while returning 
the divergent designs that have $\expected$ as good as the ones
returned by the exact solution.

First, we derive 
an optimal design $(\configuration_{opt}, \routing_{opt})$
assuming there is no load imbalance constraint
and the probability of failure is $0$. 
We then compute
an approximation factor $\beta = \frac{\nodefactor - 1}{1 + (N - 1) \nodefactor}$.
and add
the following constraint into the BIP. 
\begin{equation} \label{eq:div-greedy}
\load(\configuration, \mapping, r) \leq \frac{(1 + \beta) \totalcost(\configuration_{opt}, \mapping_{opt})}{N}, 
\forall r \in [1,N]
\end{equation}

This constraint is much easier than the constraint 
in~\eqref{eq:imbalance-exact}, 
as its right handside is a constant.
We prove that if the BIP solver can find a solution
for the modified BIP,
the returned solution is a valid solution 
and has $\totalcost(\configuration, \routing)$ bounded
as the following theorem shows.

%%%%%%%%5
\begin{theorem} \label{theorem:node-factor}
The divergent design returned by the greedy solution
satisfies all constraints in \divergent\ problem 
and has \linebreak $\totalcost(\configuration, \routing) \leq 
(1 + \beta) \totalcost(\configuration_{opt}, \routing_{opt})$. \eop
\end{theorem}

Note that this greedy scheme 
does not encounter the aforementioned problem
with $\costbip(q, r)$ not to be equal to
$cost(q,I_r)$.
Informally, the reason is due to the fact that the right hand-side 
of the inequality constraint in~\eqref{eq:div-greedy}
is a constant.
}

%%%%

%%%
\subsection{Routing Queries}
\label{sec:routing}
%%%

Recall that a divergent design $(\configuration,\routing)$ includes both the index-sets for different replicas and the routing functions $\route{0}(), \route{1}(),\dots,\route{N}()$. These functions are used at runtime, after the divergent design has been materialized, to route queries to different specialized replicas. A solution to the BIP determines how to compute these functions for a training query $q$ in $Q$, based on the variables $t^r_q$ and $t^{r,j}_q$. Here, we describe how to compute these functions for any query $q'$ that is not part of the training workload. We focus on the computation of $\route{0}(q')$ but our techniques readily extend to the other functions. 

Our first approach is inspired by the original problem statement of the tuning problem~\cite{Divergent2012} and computes $\route{0}(q')$  as the $m$ replicas with the lowest evaluation cost for $q'$. Normally this requires $N$ what-if optimizations for $q'$, but we can leverage again fast what-if optimization in order to achieve the same result more efficiently. Specifically, we first compute $\tplans(q')$ (which requires a few calls to the what-if optimizer) and then formulate a BIP that computes the top $m$ replicas for $q'$. 

Our second approach tries to match more closely the revised problem statement, where a query is not necessarily routed to its top $m$ replicas. Our approach is to match $q'$ to its most ``similar'' query $q$ in the training workload $Q$, and then to set $\route{0}(q')=\route{0}(q)$. The intuition is that the two queries would affect the divergent design similarly if they were both included in the training workload. We can use several ways to assess similarity, but we found that fast what-if optimizations provides again a nice solution. Specifically, we compute again $\tplans(q')$ and then quickly find the optimal plan for $q'$ in each replica. We then form a vector $v_{q'}$ where the $i$-th element is the set of indexes in the optimal plan of $q'$ at replica $i$. We can compute a similar vector for $v_{q}$ and then compute the similarity between $q'$ and $q$ as the similarity between the corresponding vectors\footnote{Any vector-similarity metric will do. We first convert $v_{q'}$ $v_{q}$ to binary vectors indicating which indexes are used at each replica and then use a cosine-similarity metric.}. The intuition is that $q'$ is similar to $q$ if in each replica they use similar sets of indexes. We can refine this approach further by taking into account the top-$2$ plans for each query, but our empirical results suggest that the simple approach works quite well.

\section{\textsc{RITA}: Architecture and Functionality}
\label{sec:tool}
%%%

\begin{figure}[t]
        \begin{center}
                   \includegraphics[width=0.8\hsize]{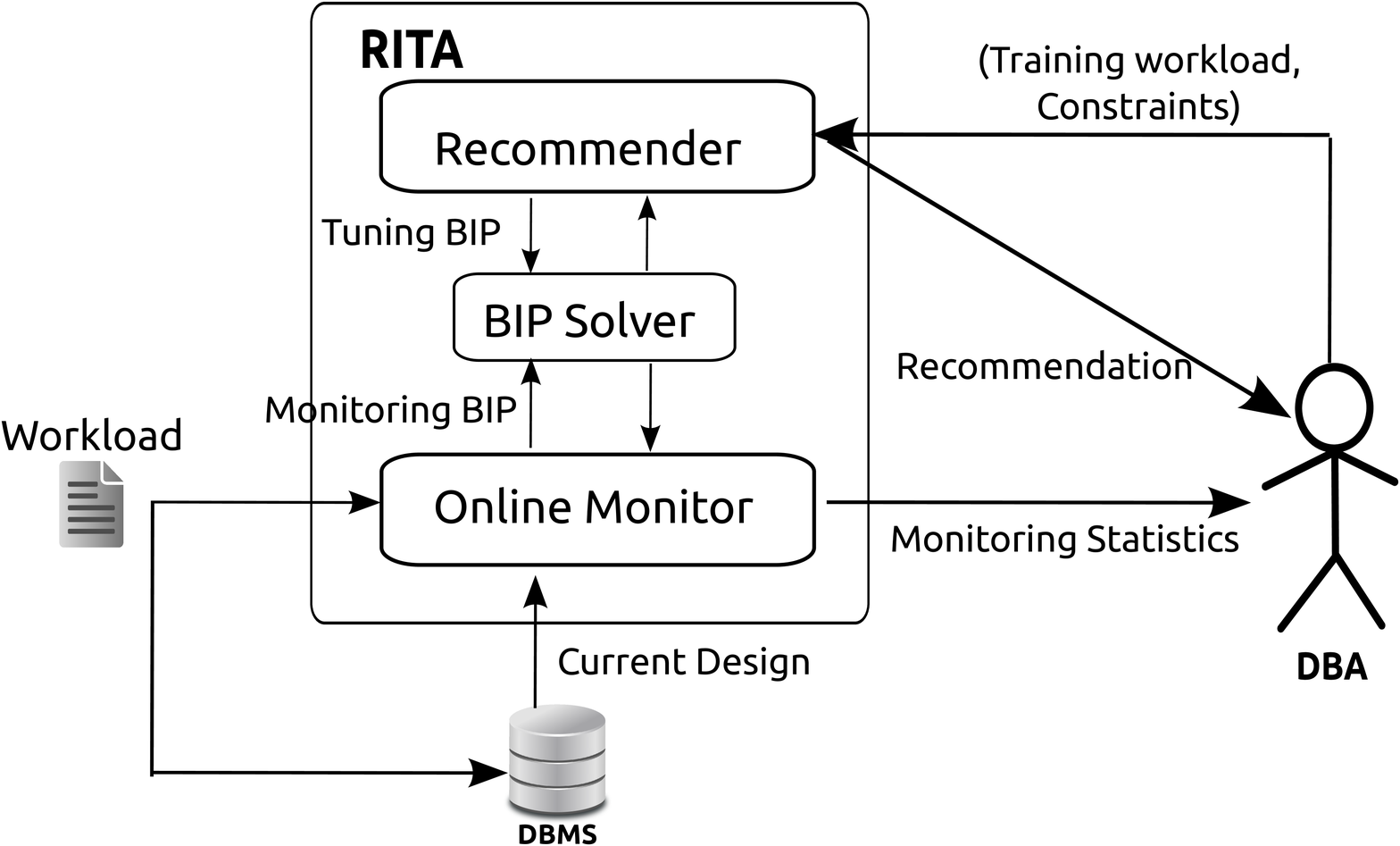}
                \caption{The architecture of RITA.}\label{fig:rita}
        \end{center} %\vspace{-0.2cm}
\end{figure}

In this section
we describe the architecture and the functionality of \rita, our proposed index-tuning
advisor. 
\rita\ builds on the reduction presented
in the previous section in order to offer a rich set of features.  

\eat{
 that leverages the theoretical result 
of reducing the divergent design tuning problem
to Binary Integer Programming 
in order to implement a unique set of features. 
}

Figure~\ref{fig:rita} shows the architecture of $\rita$. It comprises two main modules: the \textbf{online monitor}, which continuously analyzes the workload in order to detect changes and opportunities for retuning; and \textbf{the recommender}, which is invoked by the DBA in order to run a tuning session. As we will see later, both modules solve a variant of the $\divergent$ problem in order to perform their function. Also, both modules make use of the reduction we presented in the previous section in order to solve the respective tuning problems. For this purpose, they employ an off-the-shelf BIP solver. 
The remaining sections discuss the two modules in more detail. 

%%%%
\subsection{Online Monitor}
\label{sec:rita-bip}

The online monitor maintains a divergent design
$(\configuration^\slide,\routing^\slide)$ that is continuously 
re-computed based on
the latest queries in the workload. Concretely, the monitor maintains a sliding window
over the current workload (the length of the window is a parameter defined by the DBA)
and then solves $\divergent$ using the sliding window as the training workload. Each
new statement in the running workload causes an update of the window and a
re-computation of $(\configuration^\slide,\routing^\slide)$.

Once computed, the up-to-date design $(\configuration^\slide,\routing^\slide)$ is compared against the current design $(\configuration^\curr,\routing^\curr)$ of the system, using the $\expected()$ metric of each design on the workload in the sliding window. The module outputs the difference between the two as the performance improvement if $(\configuration^\slide,\routing^\slide)$ were materialized. This output, which is essentially a time series since $(\configuration^\slide,\routing^\slide)$ is being continuously updated, can inform the DBA about the need to retune the system.

Clearly, it is important for the online monitor to maintain $(\configuration^\slide,\routing^\slide)$ up-to-date with the latest statements in the workload. For this purpose, the online monitor solves a bare-bones variant of $\divergent$ that assumes $\alpha=0$ (i.e., no failures) and does not employ any constraints except perhaps very basic ones (e.g., a space budget per replica). Beyond being fast to solve, this formulation also reflects the best-case potential to improve performance, which again can inform the DBA about the need to retune the system. $\rita$ allows the DBA to impose additional constraints inside the online monitor at the expense of taking longer to update the output of the online monitor.

\subsection{Recommender}

The DBA invokes the recommender module to run a tuning session, 
for the purpose of tuning the initial divergent design or 
retuning the current design when the workload changes. 
The DBA provides an instance of the $\divergent$ problem, e.g., a training workload, the parameter $\alpha$ and several constraints, and the recommender returns the corresponding (near-)optimal divergent design. The recommender leverages the BIP-based formulation of $\divergent$ in order to compute its output efficiently.

If desired by the DBA, the recommender can also return a set of possible designs that represent trade-off points within a multi-dimensional space. 
For example, suppose that 
the DBA specifies 
the workload-evaluation cost and the materialization cost 
of each design as the two dimensions of this space. 
We expect that 
a design 
with a higher materialization cost will have more indexes,  
and hence will have a 
lower workload-evaluation cost.
The recommender formulates a BIP to compute
an optimal divergent design that does not bound the materialization cost. 
The solution
provides an upper bound on materialization cost, 
henceforth denoted as $\Cmat$. 
Subsequently, the recommender formulates several tuning BIPs 
where each BIP puts a different threshold on 
the materialization cost based on $\Cmat$ and some factor 
(e.g., materialization cost should not exceed $.5\times\Cmat$). 
The thresholds for these Pareto-optimal designs 
can be predefined or chosen based on 
more involved strategies such as the Chord algorithm~\cite{DDY10}.
An important point is that the successive BIPs are essentially identical except for the modified constraint on the materialization cost, which enables the BIP solver to work fast by reusing previous computations.

The DBA can also add other parameters into this exploration.
For example,
adding the number of replicas as another parameter
will cause
the recommender 
to use the same process to generate designs for the hypothetical scenarios 
of expanding/shrinking the set of replicas. 
The final output can inform the DBA about the trade-off between workload-evaluation
cost and design-materialization cost, and how it is affected by the number of replicas. 

Besides being able to perform tuning sessions efficiently, $\rita$'s recommender module gains 
two important features through its reliance on a BIP solver. 

\begin{itemize}

	\item \textbf{Fast refinement.} As mentioned earlier, the BIP solver can reuse computation if the current BIP is sufficiently similar to previously solved BIPs. $\rita$ takes advantage of this feature to offer fast refinement of the solution for small changes to the input. E.g., the optimal divergent design can be updated very efficiently if the DBA wishes to change the set of candidate indexes or impose additional inter-replica constraints. 

	\item \textbf {Early termination.} In the course of solving a BIP, the solver maintains the currently-best solution along with a bound on its suboptimality. This information can be leveraged by $\rita$ to support early termination based on time or quality. For instance, the DBA may instruct the recommender to return the first solution that is within 5\% of the optimal, which can reduce substantially the total running time without compromising performance for the output divergent design. Or, the DBA may ask for the best solution that can be computed within a specific time interval. 
\end{itemize}

\section{Experimental Study}
\label{sec:expt}
%%%

This section presents the results of the experimental study 
that we conducted in order to evaluate the effectiveness of 
\rita. 
In what follows, we first discuss the experimental methodology and 
then present the findings of the experiments.

\begin{table}[tb]
{\small
 \centering

 \begin{tabular}{ l l } 
    \textbf{Parameter} & \textbf{Values} \\ \hline
	%%%
  Number of replicas ($N$) & $2$, $\textbf{3}$, $4$, $5$ \\ 
  Routing multiplicity ($m$) & $1$, $\textbf{2}$, $3$ \\ 
  Space budget ($b$) & 0.25$\times$,  \textbf{0.5$\times$}, 1.0$\times$, INF \\ 
  Prob. of failure ($\alpha$) & \textbf{0.0}, $0.1$, $0.2$, $0.3$, $0.4$ \\
  Load skew ($\nodefactor$)	& $1.3$, $1.5$, $1.7$, $1.9$, $2.1$, \textbf{INF} \\ 
  Percentage-update ($\pupd$) 
	& $10^{-5}$, $10^{-4}$, \textbf{${\textbf{10}}^{\textbf{-3}}$}, $10^{-2}$ \\
   Sliding window ($w$) & $40$, $\textbf{60}$, $80$, $100$ \\
\hline
 \end{tabular}
\caption{Experimental parameters (default in bold).\label{tab:parameter}}
}
\end{table}

%%%%
\begin{figure*}[tbh]
        \begin{tabular}{ccc}
        \includegraphics[width=0.3\textwidth]{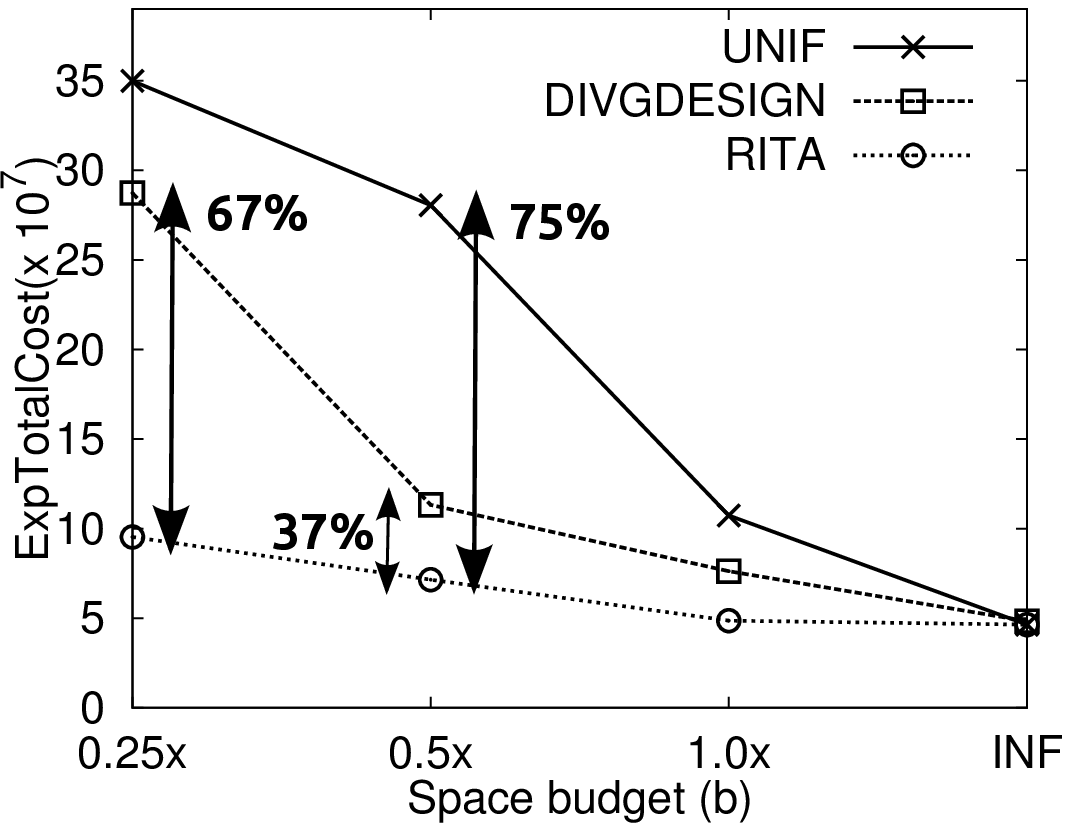} 
		& \hspace{-0.2cm} \includegraphics[width=0.3333\textwidth]{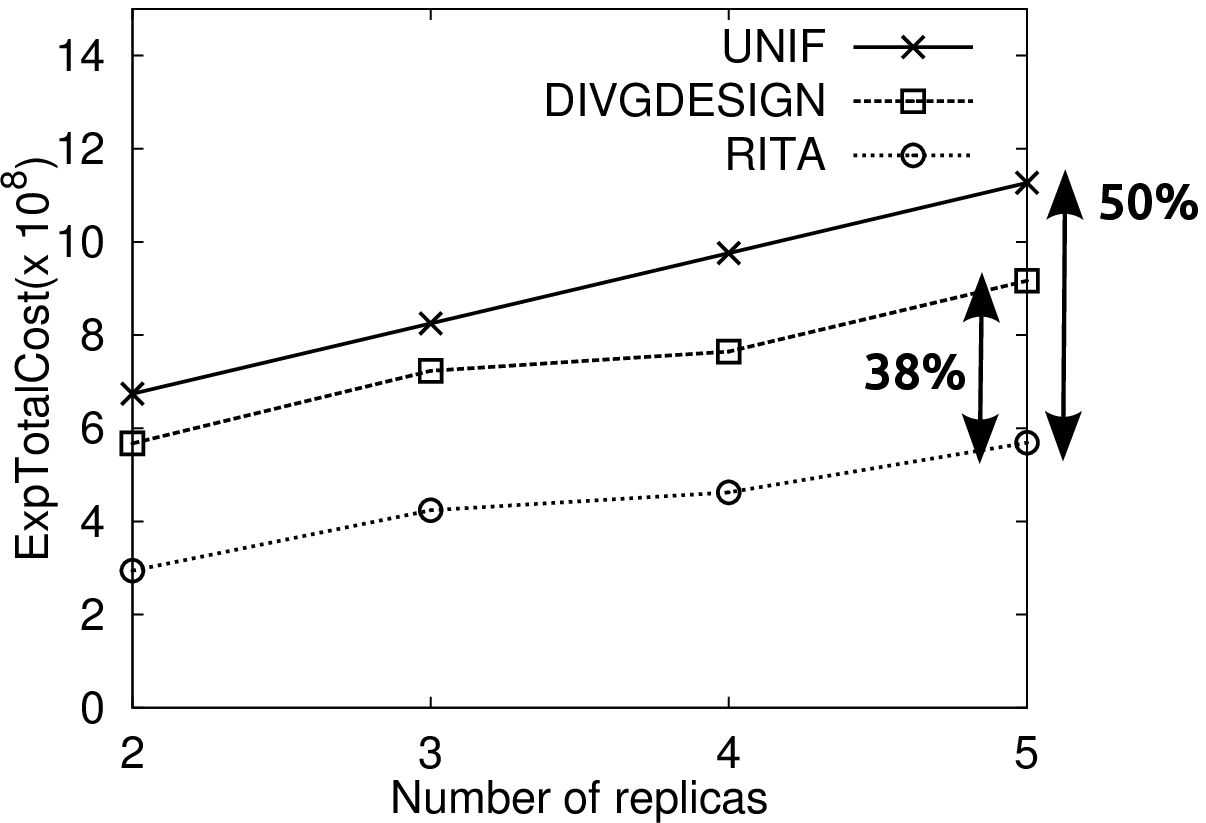} 
		%%%
		& \hspace{-0.2cm} \includegraphics[width=0.33\textwidth]{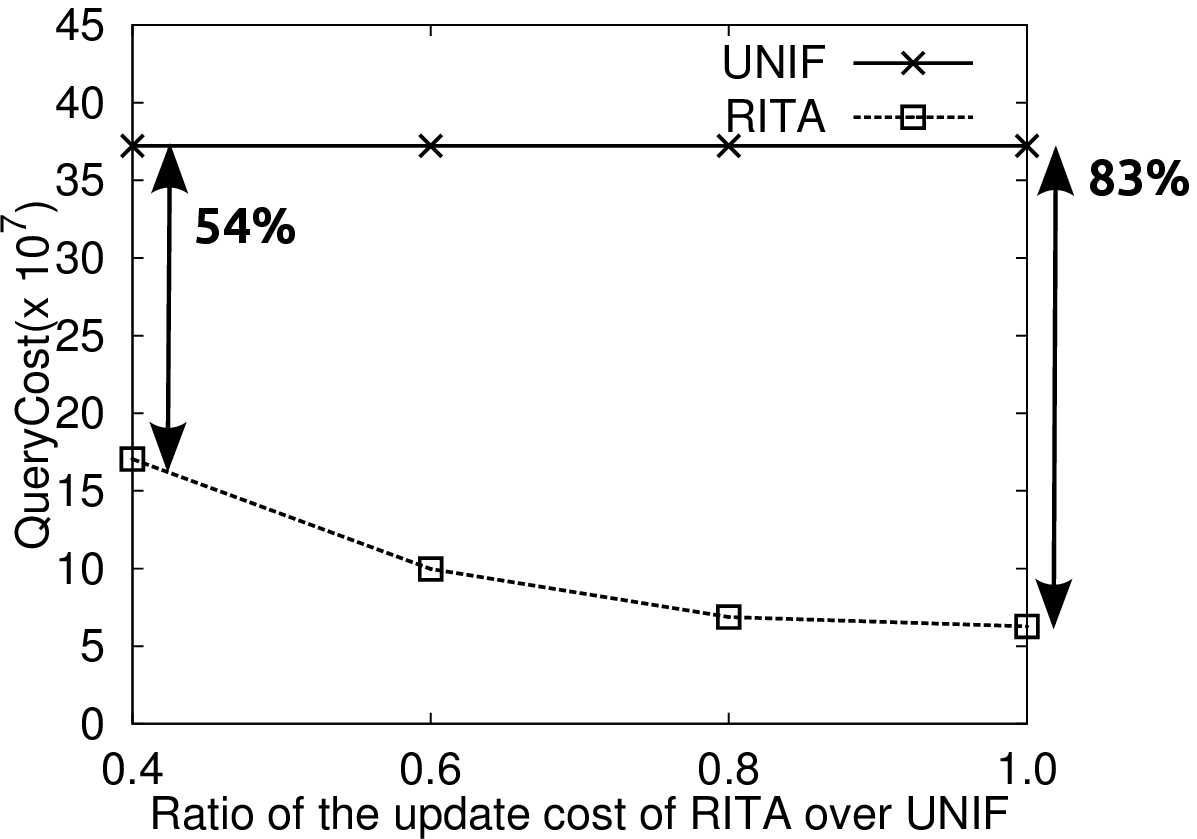} 
		\\
		%%%
                \parbox{0.3\textwidth}{\cspace \centering\caption{Varying space budget
		on $\static$, 
                 $\alpha = 0$, $\nodefactor = +\infty$. 
		} \label{fig:equivalent}} 
		%%%
		&	\parbox{0.3\textwidth}{\cspace \centering
		\caption{Varying number of replicas
		on $\mix$, 
                 $\alpha = 0$, $\nodefactor = +\infty$.}  \label{fig:update}} 
		%%%		
		&	\parbox{0.3\textwidth}{\cspace \centering
		\caption{Constraint the update cost
			on $\mix$, $\alpha = 0$, $\nodefactor = +\infty$.
			} \label{fig:update_constraint}} 
		\\
		\\
        \end{tabular}
        \vspace{-10pt}
\end{figure*}

%%%
\subsection{Methodology}
%%%

%%%
\stitle{Advisors.} Our experiments use a prototype implementation of $\rita$
written in Java. The prototype employs CPLEX v12.3 as the off-the-shelf BIP solver, and a custom implementation of INUM for fast what-if optimization. The database system in our experiments is the freely available IBM DB2 Express-C.
The CPLEX solver is tuned to return the first solution
that is within 5\% of the optimal. In all experiments, we use $\prita$ to denote the divergent design computed by $\rita$. 

We compare \rita\ against the heuristic advisor \divgdesign\ that was 
introduced in the original study of divergent designs~\cite{Divergent2012}. 
$\divgdesign$ employs IBM's physical design advisor
internally. Similar to~\cite{Divergent2012}, we run \divgdesign\ 
five times and output the lowest-cost design out of 
all the independent runs. We denote this final design as $\pdd$.
We note that the comparison against $\divgdesign$ concerns only a restricted definition of the general tuning problem, since $\divgdesign$ supports only a space budget constraint and does not take into account replica failures. 

We also include in the comparison the common practice of using the same index configuration with each replica. The identical configuration is computed by invoking 
the DB2 index-tuning advisor on the whole workload. We use $\punif$ to refer to the resulting design.

%%%
\stitle{Data Sets and Workloads.}
We use a 100GB TPC-DS database~\cite{tpcds} for our experiments, along with three different workloads, namely $\static$, $\mix$ and $\dynamic$. $\static$ comprises 40 complex 
TPC-DS \add{benchmark} queries 
that are currently supported by our INUM implementation~\cite{inumplus}.
$\mix$ adds INSERT statements that model updates to the base data. $\dynamic$ models a workload of 600 queries that goes through three phases, each phase corresponding to a specific distribution of the queries that appear in $\static$. The first phase corresponds mostly to queries of 
low execution cost\footnote{The execution cost
is measured with respect to the optimal index-set
for each query returned by the DB2 advisor.}, 
then the distribution is inverted for the second phase, and reverts back to the starting distribution in the first phase.

In all cases, the weight for each query is set to one, whereas the update of each INSERT statement is determined as the product of the cardinality of the corresponding relation and a 
\emph{percentage-update} parameter ($\pupd$). 
This parameter allows us to simulate different volumes of updates when we test the advisors. 

\eat{
Note that the size of the database  does not affect
the trends observed in our experiments, 
as all performance metrics are based on the DB2 optimizer's cost model (more on this later).
}

%%%
\stitle{Candidate Index Generation.}
Recall from Section~\ref{sec:problem} that 
the \divergent\ problem
assumes that a set of candidate indexes $\Iset$ is provided as input. 
There are many methods for generating $\Iset$ based on the
database and representative workload. In our setting, 
we use DB2's service to select the optimal indexes per query (without any space constraints) 
and then perform a union of the returned index-sets. 
The resulting index-set, which is optimal 
for the workload in the absence of constraints and update statements, 
contains $103$ candidate indexes and has a total size of $265$GB.

%%%%
\stitle{Experimental Parameters.} 
Our experiments vary the following parameters: 
the number of replicas $N$, 
the per-replica space budget $b$, 
the probability of failure $\alpha$,
the load-skew factor $\nodefactor$, 
the percentage of updates in the workload $\pupd$ (for $\mix$), 
and the size of the sliding window $w$ for online monitoring. 
The routing multiplicity factor ($m$) 
is set to be $\lceil N/2 \rceil$. 
We report the additional experimental results
when varying $m$ in Appendix~\ref{app:vary-m}.
Table~\ref{tab:parameter} shows the parameter values 
tested in our experiments.  
Note that the storage space budget is measured as a multiple of the base data size, 
i.e., given TPCDS $100$ GB base data size, a space budget of $0.5\times$ indicates 
a $50$ GB storage space budget.

%%%
\stitle{Metrics.} We use $\expected()$ to measure the performance of a divergent design. To allow meaningful comparisons among the designs generated by different advisors, we compute this metric for a specific design by invoking DB2's what-if optimizer for all the required cost factors. This methodology, which is consistent with previous studies on physical design tuning, allows us to gauge the effectiveness of the divergent design in isolation 
from any estimation errors in the optimizer's cost models.
In some cases, we also report the performance improvement of 
$\prita$ over $\pdd$ and $\punif$, where
the performance improvement of a design $X$
over a design $Y$ is computed as 
$1 - \expected(X) / \expected(Y)
$. We also report the time that is taken to execute the index advisor 
for the corresponding divergent design.

%%%
\stitle{Testing Platform.} 
All measurements are taken on a machine running 64-bit Ubuntu OS
with four-core Intel(R) Core(TM) i7 CPU Q820 @1.73GHz CPU
and 4GB RAM.

%%%
\subsection{Results}
%%%
\label{sec:static}

%%%
\stitle{Basic Tuning Problem.}
%%%
We first consider a basic case of \divergent\ when $\alpha = 0$ and $\nodefactor = +\infty$, i.e., no failures occur and there is no constraint on load skew. There is a single constraint on the divergent design which is the per-replica space budget. 
This setting corresponds essentially to the original problem statement in~\cite{Divergent2012}.

We begin with a set of experiments that evaluates the performance of $\rita$ and the competitor advisors on the query-only workload $\static$. In this case indexes can only bring benefit to queries, and hence the only restraint in materializing indexes comes from any constraints. Figure~\ref{fig:equivalent}
shows the performance of the divergent designs computed by 
\rita, \divgdesign, and \unif, as we vary the space budget parameter. 
(All other parameters are set to their default values according to Table~\ref{tab:parameter}.) 
The results show that \rita\ consistently outperforms 
the other two competitors for a wide range of space budgets. 
The improvement is up to 75\% over $\punif$ 
and up to 67\% for $\pdd$,
i.e., the performance of $\prita$ is 4$\times$ better than $\punif$
and is 3$\times$ better than $\pdd$.
Another way to view these results is that $\rita$ can make much 
more effective usage of the aggregate disk space for indexes. 
For instance, $\prita$ at $b=0.25\times$ matches the performance of 
$\pdd$ at $b=1.0\times$, i.e., with four times as much space for indexes. 
In all cases, $\rita$'s better performance can be attributed to the fact that it searches a considerably larger space of possible designs, through the reduction to a BIP. As the space budget increases, the performance of $\pdivg$, $\pdd$ and $\punif$ converge as all beneficial indexes can be materialized in every design.

We next examine the performance of $\rita$ and 
the competitor advisors on a workload of queries and updates.
Figure~\ref{fig:update} reports the performance of $\pdivg$, $\pdd$
and $\punif$ for the workload $\mix$, as we vary the number of replicas in the system. 
We chose this parameter as updates have to be routed to all replicas 
and hence it controls directly the total cost of updates. 
We observe that the improvement of \rita\ over \unif\ is 
in the order of $50\%$ and the improvement of \rita\ over \divgdesign\ is $38\%$. 
Not surprisingly, the improvements increase with the number of replicas. 
The reason is that 
$\rita$ is able to find designs with much fewer indexes per replica 
compared to $\punif$ and $\pdd$, which contributes to a lower update cost. 
For instance for $N = 3$ and $b = 0.5 \times$,
the number of indexes per replica of $\prita$ is $(44, 44, 31)$
compared to $(70, 70, 70)$ for $\punif$ and $(46, 50, 53)$ for $\pdd$. 
We conducted similar experiments with different weights 
for the update statements and observed similar trends.

%%% failure & imbalance
\begin{figure*}[tbh]
        \begin{tabular}{ccc}
        \includegraphics[width=0.32\textwidth]{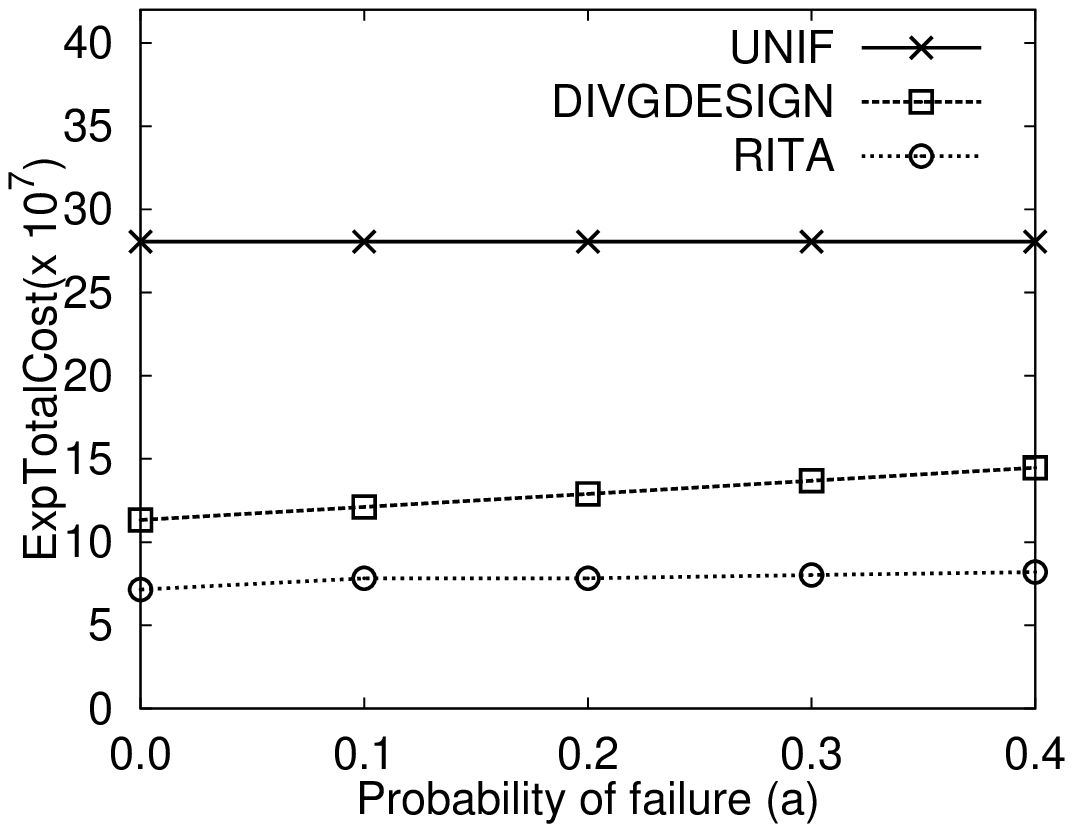} 
		& \hspace{-0.2cm} \includegraphics[width=0.32\textwidth]{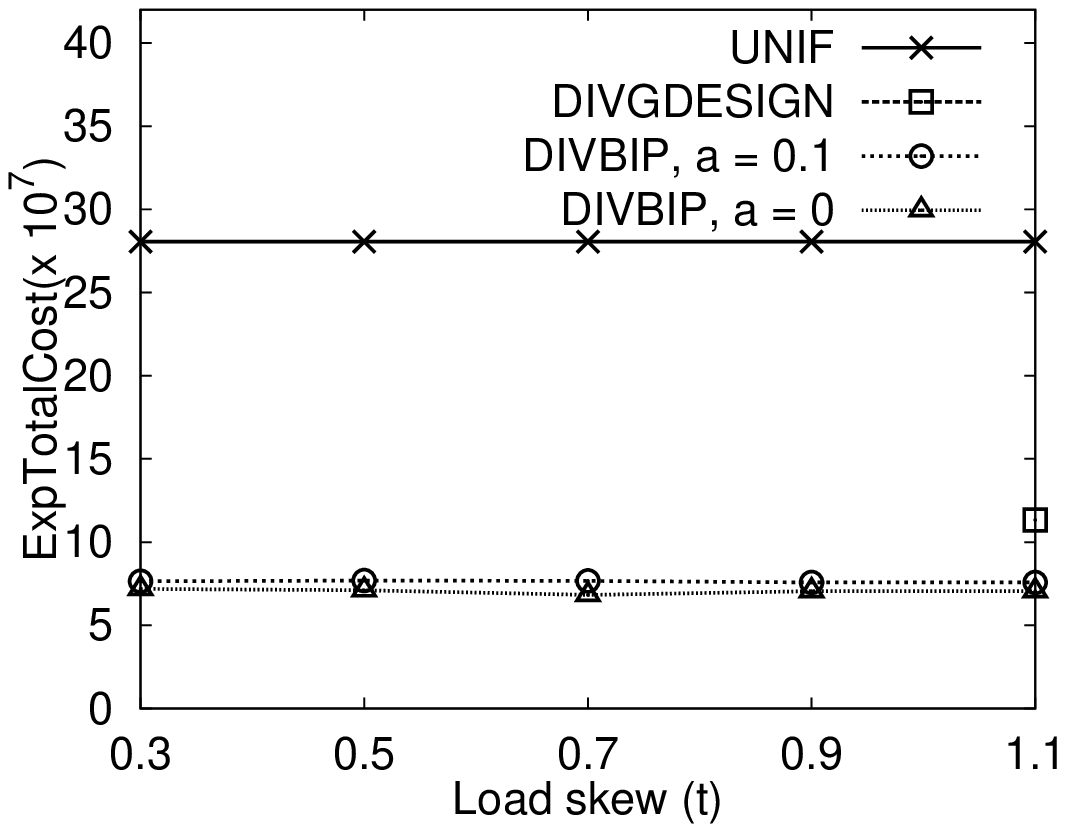} 
		%%%
		& \hspace{-0.2cm} \includegraphics[width=0.32\textwidth]{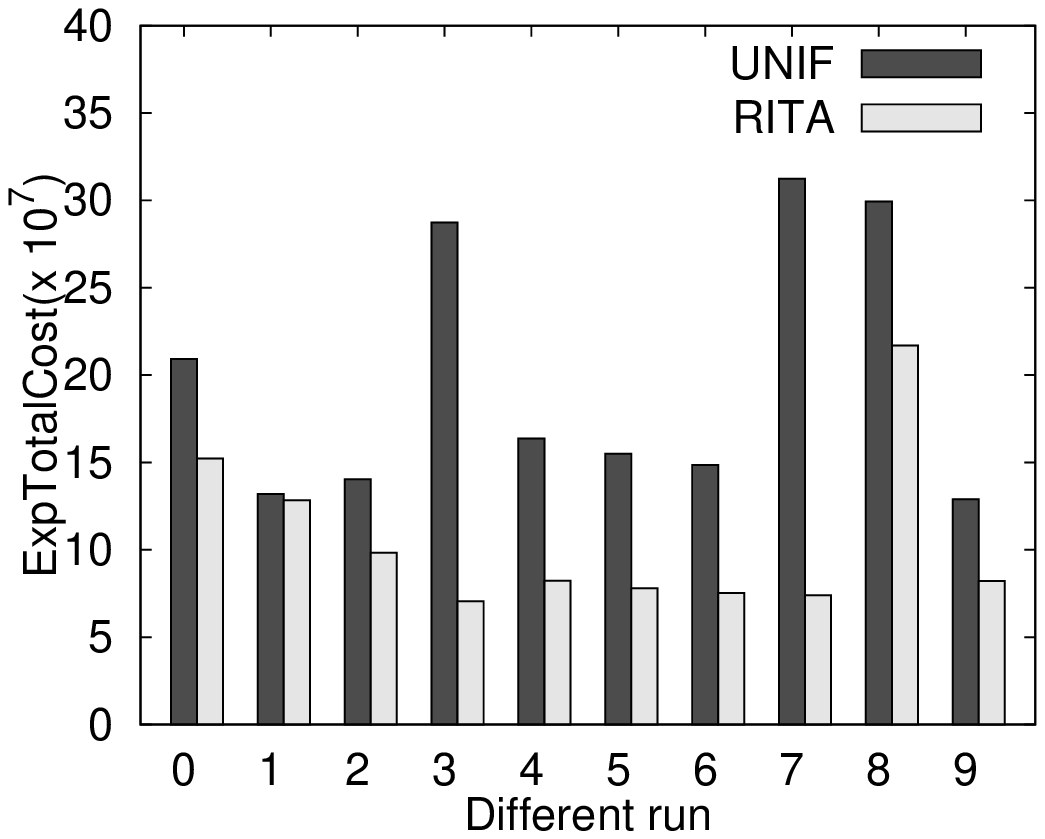} 
		\\
		%%%
		\parbox{0.3\textwidth}{\cspace \centering\caption{ Varying probability of failure on 
			$\static$, $\alpha \geq 0$, $\nodefactor = +\infty$. } \label{fig:failure}} 
		%%%			
		&	\parbox{0.3\textwidth}{\cspace \centering\caption{ Varying load skew on 
			$\static$, $\alpha \geq 0$, $\nodefactor < +\infty$. } \label{fig:imbalance}} 
		%%%		
		&	\parbox{0.3\textwidth}{\cspace \centering\caption{Routing 
                queries} \label{fig:unseen}} 
		\\
		\\
        \end{tabular}
        \vspace{-10pt}
\end{figure*}

%%%
The next experiment examines how $\rita$'s advanced functionality can control even further the cost of updates. Instead of having $\rita$ minimize the combined cost of queries and updates, we instruct the advisor to perform the following constrained optimization: minimize query cost such that update cost is at most $x\%$ of the update cost of a uniform design. Essentially, the desire is to make updates much faster compared to the uniform design, and also try to get some benefits for query processing. This changed optimization requires minimal changes to the underlying BIP: the objective function includes only the cost of evaluating queries, and the constraints include an additional linear constraint on the total update cost based on the update cost of the uniform design (which can be treated as a constant). The ease by which we can support this advanced functionality reflects the power of expressing $\divergent$ as a BIP.\tightpara

Figure~\ref{fig:update_constraint} depicts the cost of the query workload under $\prita$ as we vary the factor that bounds the update cost relative to $\punif$. For comparison we also show the cost of the query workload for $\punif$. 
The results show clearly that the designs computed by $\rita$ 
can improve performance dramatically even in this scenario. 
As a concrete data point, when the bounding factor is set to $0.4$, 
$\prita$ makes query evaluation more than 2$\times$ cheaper compared to $\punif$ and 
incurs an update cost that is less than half the update cost of $\punif$. 

Overall, our results demonstrate that $\rita$ clearly outperforms its competitors on the basic definition of the divergent-design tuning problem. From this point onward, we will evaluate $\rita$'s effectiveness with respect to the generalized version of the problem (i.e., including failures and a richer set of constraints). In the interest of space, we present results with query-only workloads, as the trends were very similar when we experimented with mixed workloads.\tightpara

\stitle{Factoring Failures.}
%%%
We first evaluate how well $\rita$ can tailor the divergent design in order to account for possible failures, as captured by the failure probability $\alpha$. 

Figure~\ref{fig:failure} shows the $\expected()$ metric for $\prita$, $\pdd$ and
$\punif$ as we vary the probability of failure $\alpha$. There are two interesting
take-away points from the results. The first is that $\prita$ has a relatively stable
performance as we vary $\alpha$. Essentially, we can reap the benefits of divergent
designs even when there is an increased probability of failure in the system, as long
as there is a judicious specialization for each replica and a controlled strategy to
redistribute the workload (two things that $\rita$ clearly achieves). The second
interesting point is that the gap between $\prita$ and $\pdd$ increases with $\alpha$.
Basically, $\pdd$ ignores the possibility of failures (i.e., it always assumes that
$\alpha=0$) and hence the computed design $\pdd$ cannot handle effectively a
redistribution of the workload when a replica becomes unavailable. As a side note, the
cost of $\punif$ is unchanged for different values of $\alpha$, since each query has
the same cost under $\punif$ on all replicas, and hence a redistribution of the
workload does not change the total cost.

%%%
\stitle{Bounding Load Skew.}
We next study how $\rita$ handles a (inter-replica) constraint on load skew. 
Recall that the constraint has the following form: for any two replicas, 
their load should not differ by a factor of more than $1+\tau$, 
where $\tau\ge 0$ is the load-skew parameter. 
A balanced load distribution is important for good performance in a distributed system and hence we are interested in small values for $\tau$. The ability to satisfy such constraints is part of $\rita$'s novel functionality. 

Figure~\ref{fig:imbalance} shows the performance of $\pdivg$, $\pdd$ and $\punif$
as we vary parameter $\nodefactor$ that bounds the load skew 
(recall that $\nodefactor=0$ implies no skew). 
We report two sets of results for $\rita$ corresponding to $\alpha=0$ (no failures) 
and $\alpha=0.1$ (10\% chance that one replica will fail) respectively, 
in order to examine the interplay between $\alpha$ and $\nodefactor$. 
Note that we report the results for the greedy version of $\rita$, 
which are identical to the exact solution of the constraint. 
The chart shows a single point corresponding to $\pdd$, 
given that it is not possible to constrain load skew within $\divgdesign$. 
As shown, $\pdd$ has a significant load skew of up to a $2x$ difference between replicas. 
This magnitude of skew limits severely the ability of the system to maintain a balanced load and to route queries effectively. 
In contrast, $\rita$ is able to compute designs that maintain a low expected cost (up to 4$\times$ better than $\unif$) and also satisfy the bound on load skew. These savings are not affected by the value of $\alpha$--$\rita$ is again able to make a judicious choice for the divergent design in order to satisfy all constraints and handle failures. 
Note that the uniform design trivially satisfies the load-skew constraint for all values of $\nodefactor$ as every replica has the same design and hence the system can be perfectly balanced.

%%%
\begin{table}[tb]
   \centering
        \begin{tabular}{cc}
                \begin{tabular}{|c||c|c| }  \hline
                                     & $\alpha = 0$ & $\alpha > 0$ \\ \hline \hline
                $\nodefactor = +\infty$      & $4$  & $60$ \\ \hline
                $\nodefactor < +\infty$      & $9$  & $84$ \\ \hline
                \end{tabular}
                &
                \begin{tabular}{|c||c|c| }  \hline
                                     & $\alpha = 0$ & $\alpha > 0$ \\ \hline \hline
                $\nodefactor = +\infty$      & $20$  & $120$ \\ \hline
                $\nodefactor < +\infty$      & $30$  & $146$ \\ \hline
                \end{tabular}
        \\

        (a) Workload $\static$ & (b) Workload $\dynamic$
        \end{tabular}
\caption{The average running time of \rita\
(in seconds)}
\label{tab:time}

\end{table}

%%%
\stitle{Running Time.}
Given an instance of the basic \divergent\ problem ($\alpha = 0$, $\nodefactor = +\infty$), \rita\ spends $180$ seconds to initialize INUM, a step that is dependent solely on the input workload, and then requires only four seconds to formulate and solve the resulting BIP. An important point is that the initialization step can be reused for free 
if the workload remains unchanged, e.g., if the DBA runs several tuning sessions using the same workload but different constraints each time. Each subsequent tuning session can thus be executed in the order of a few seconds, offering an almost interactive response to the DBA. 

Table~\ref{tab:time}(a) shows the running time for $\rita$ 
on $\static$ workload
as we vary the load-skew factor and the probability of failure, 
two parameters that correspond to novel features of our generalized tuning problem. Note that the time to initialize INUM remains the same as before and is excluded from all the cells of the table. 
Clearly, the new features complicate the tuning problem and hence have an impact on running time. 
Still, even for the most complex combination ($\nodefactor > 0$ and $\alpha>0$) $\rita$ 
has a reasonable running time of at most $84$ seconds. 
Moreover, as noted in Section~\ref{sec:tool}, $\rita$ can always be invoked with a time threshold and 
return the best design that has been identified within the allotted time.

Table~\ref{tab:time}(b)
shows
the same details about the running time of $\rita$
on $\dynamic$ workload, consisting of $600$ queries.
$\rita$ also runs efficiently for this large workload.

%%%%eMonintoring and elastic
\begin{figure*}[tbh]
        \begin{tabular}{ccc}
        \includegraphics[width=0.32\textwidth]{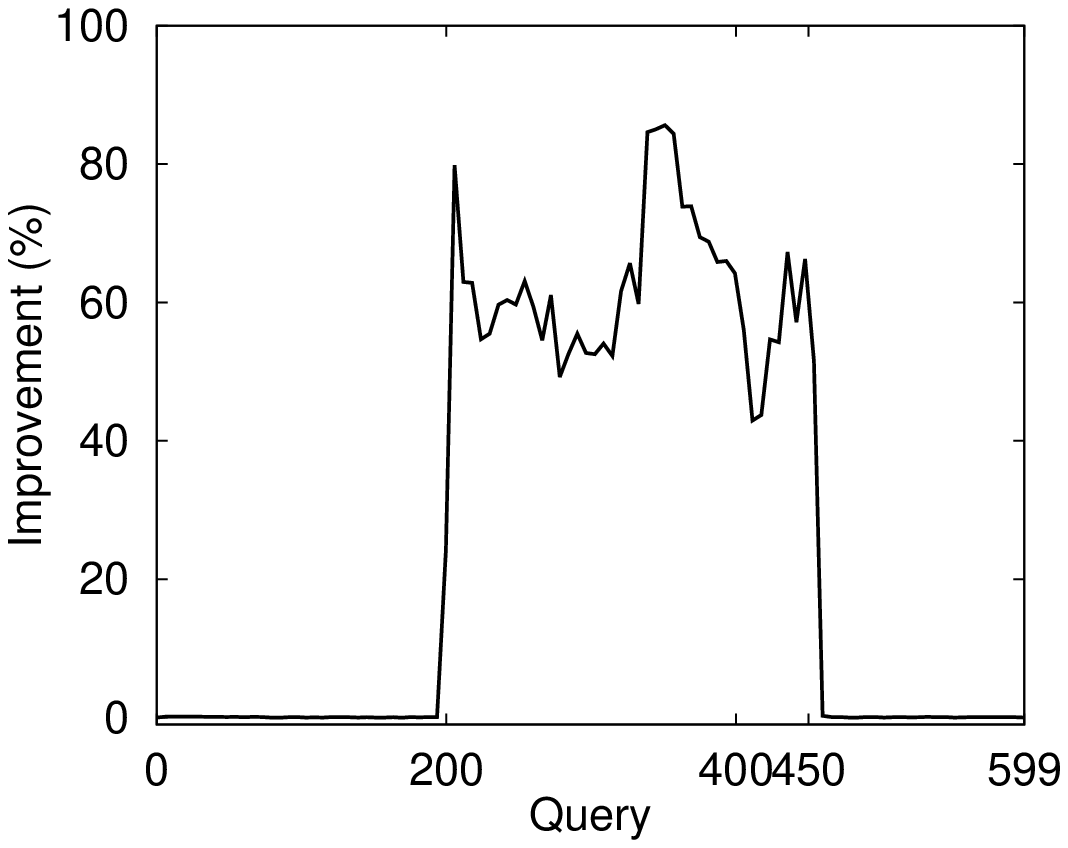} 
		%%%
		& \hspace{-0.2cm} \includegraphics[width=0.3\textwidth]{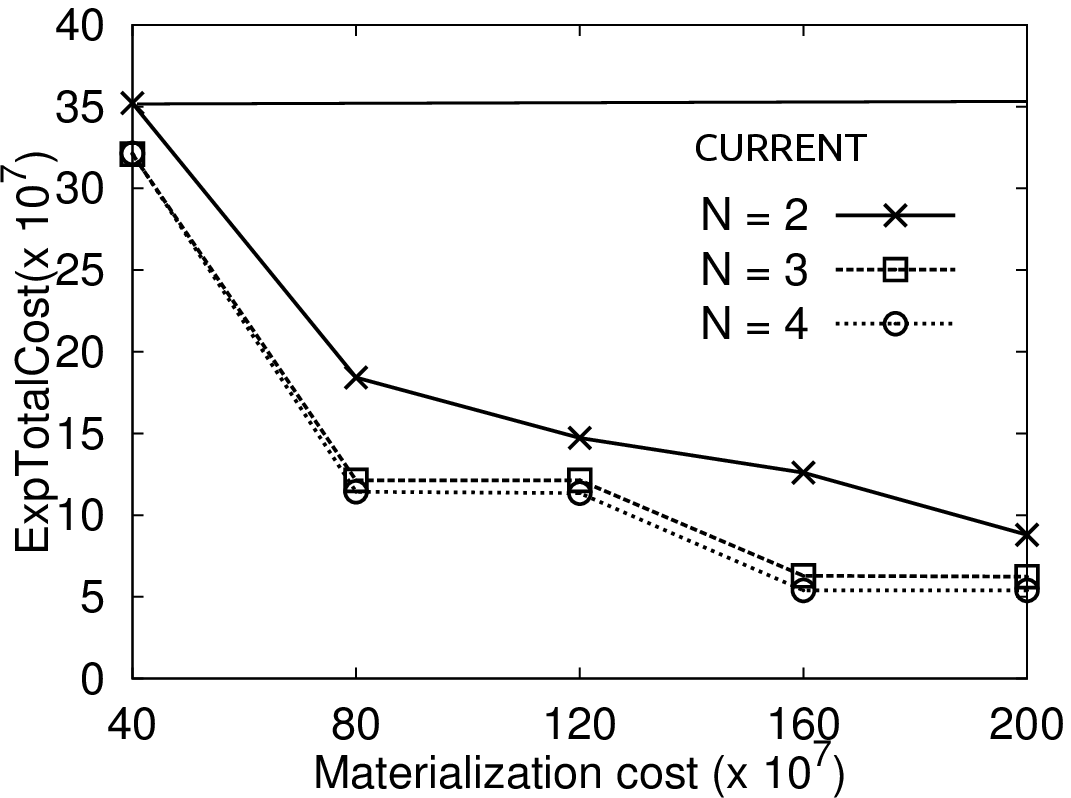}
	%%%		
		& \hspace{-0.2cm}  
		\includegraphics[width=0.3\textwidth]{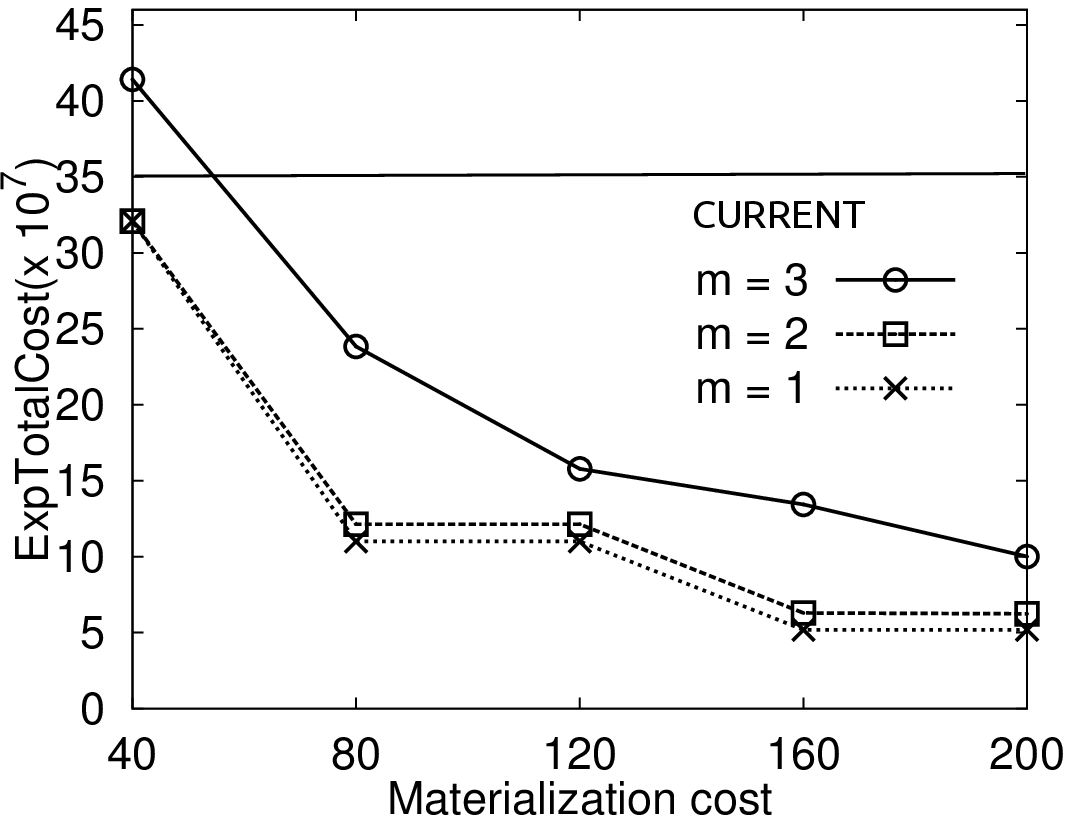} 
		\\
          \\     
		%%%
		\parbox{0.3\textwidth}{\cspace \centering\caption{Online monitoring} \label{fig:online}} 
		%%%		
		&	\parbox{0.3\textwidth}{\cspace\centering\caption{Elasticity retuning,
		varying number of replicas and materialization costs} 
	\label{fig:elastic}} 
	%%%
	& 
		\parbox{0.3\textwidth}{\cspace \centering\caption{Elasticity retuning,
		varying routing multiplicity factor and materialization costs} \label{fig:elastic_m}} 
		\\
		\\
        \end{tabular}
        \vspace{-10pt}
\end{figure*}

%%%
\stitle{Routing.}
%%%
The next set of experiments examines the effectiveness of the routing scheme we introduced in Section~\ref{sec:routing}, 
which determines how to route unseen queries (i.e., queries not in $W$ for which the routing functions $\route{j}$ cannot be applied) 
to ``good'' specialized replicas.

Our test methodology splits $\static$ 
into two (sub)workloads: (1) a training workload that plays the role of $W$ and 
consists of $30$ randomly-chosen queries of $\static$,
and (2) a testing workload that plays the role of the unseen queries and 
consists of the remaining $10$
queries. 
We compute a divergent design $\pdivg$ for the training workload,
and 
route
the queries in $\static$ (including
both seen and unseen queries)
assuming $\pdivg$ is deployed. 
For comparison, we apply the same methodology to the uniform design: 
we first derive $\punif$
for the training workload and then 
route the queries in $\static$ workload
in round-robin fashion. 
We repeat this experiment for ten independent runs, where each run involves a different random split of the workload. 

Figure~\ref{fig:unseen} shows the expected cost of the workload for $\prita$ and $\punif$ for each run. 
The results show that \rita\ outperforms \unif\ consistently, even though replica specialization does not take into account the unseen queries. The improvements vary across different runs depending on the choice of the workload split, but overall we can reap the benefits of divergent designs even with incomplete knowledge of the workload. 

\eat{
\rita\ takes $1.5$ seconds to compute the routing functions
for an unseen query in the testing workload (Note that it is nearly for free to 
compute the routing functions for 
the seen queries in the training workload.) 
}

%%%
\stitle{Online Monitoring.} The aforementioned routing scheme can help the system cope with unseen queries, but at some point it may become necessary to retune the divergent design if the actual workload is substantially different than the training workload. The next experiment evaluates the online-monitoring module inside $\rita$ which is designed for the task of detecting workload changes. 

We assume that the system receives the dynamic workload $\dynamic$, which shifts to a different query distribution after query 200 and then shifts back to the original distribution at query 400. Initially, the system is equipped with a divergent design $\pcurr$ that is tuned with a training workload from the first query distribution. The monitoring module continuously computes a divergent design $\pslide$ based on a sliding window of the last 60 queries in the workload, and outputs the improvement on $\expected()$ if $\pslide$ were used instead of $\pcurr$. 

Figure~\ref{fig:online}
shows the monitoring statistics produced by the online-monitoring module of $\rita$ for the $\dynamic$ workload. Matching our intuition, the output shows that $\pslide$ has small improvements for the first 200 queries (around 30\%), 
since the current design $\pcurr$ is already tuned for the particular phase of the workload. 
However, as soon as the workload shifts to a different distribution, the output shows a considerable improvement of more than 60\%. This can be viewed as a  strong indication that a retuning of the system can yield significant performance improvements. The spike tapers off close to query $450$, 
since in this experiment the workload shifts back to its previous distribution and hence there is no benefit to changing the current design.

$\rita$ requires $1.2$ seconds on average to analyze each new query in this workload, and can thus generate an output that accurately reflects the actual workload. 
We conducted similar experiments with different values of the 
length of the sliding window and observed similar results. For instance, \rita\ takes at most $3$ seconds when the sliding window is set to $100$ statements.

%%%
\stitle{Elastic Retuning.} After observing the monitoring output, 
the DBA can invoke the recommender module to examine different recommendations 
for retuning the system in an elastic fashion. 
The next set of experiments evaluate how fast $\rita$ can generate 
these recommendations and also their quality.

We employ a scenario that builds on the previous experiment on online monitoring. 
Specifically, we assume that the DBA invokes the recommender using the sliding window of 60 queries 
that corresponds to the spike in Figure~\ref{fig:online}. 
Moreover, the DBA specifies two dimensions of interest with respect to a new divergent design: 
the workload-evaluation cost and the cost of materializing the design.
Also, the DBA wants to study the effect of shrinking and expanding
the number of replicas.
We assume that the DBA sets the probability of failure 
($\alpha$) to be $0$ 
in order to allow \rita\ to execute fast and generate the output in a timely fashion. 
After inspecting the output, the DBA may invoke another (more expensive) 
tuning session for a specific choice of replicas (or routing multiplicity factor) 
and reconfiguration cost, 
and a non-zero $\alpha$. 
Our results in Figure~\ref{fig:failure} show that \rita\ 
can compute a divergent design that matches the same level of performance as the case for $\alpha=0$.

Figures~\ref{fig:elastic}
shows the output of the recommender based on our testing scenario.
Each point $(x,y)$ on the chart corresponds to a divergent design that requires $x$ 
cost units to materialize and whose $\expected()$ is equal to $y$. 
The three curves labeled $N=z$, $z \in \{2,3, 4\}$, 
represent divergent designs that employ $z$ replicas. 
We assume that $N=3$ is the current setting in the system, 
and hence $N=2$ (resp. $N=4$) represents dropping (resp. adding) a replica. 
The chart also shows the $\expected()$ metric of the current design, for comparison. 
As shown, there are several options to significantly improve (by up to $7\times$) 
the performance of the current design. Moreover, the DBA obtains the following valuable information:
there is a least materialization cost in order to get some improvement;
designs that require more than 160 units of materialization cost
offer diminishing returns for $N = 3$ and $N = 4$;
and there is not much benefit to increasing the number of replicas, 
since $N=3$ and $N=4$ have virtually identical performance. 
Based on these data points, the DBA can make an informed decision about how to retune the divergent design in the system. 
$\rita$ requires a total of $20$ seconds to generate the points in the chart. 
Note that the recommender does not have to initialize INUM for the training workload, 
as this initialization has already been performed inside the monitoring module.
This short computation time facilitates
an exploratory approach to index tuning.

We employ another scenario that is similar
to the previous one except that 
we assume the DBA wants to study
the effect of using different values for the routing multiplicity factor, 
while keeping the number of replicas unchanged.

Figure~\ref{fig:elastic_m}
shows the output of the recommender based on the above testing scenario.
The three curves labeled $m=z$, $z \in \{1, 2, 3\}$,
represent divergent designs that have the routing multiplicity factor $z$
(We assume that $m=2$ is the current setting in the system).
We observe that designs that require more than 80 units of materialization cost 
when routing queries for $m = 2$
has slightly better performance 
when routing queries for $m = 1$. 
This result indicates that
we can obtain designs with some flexibility 
in routing queries (i.e., $m = 2$)
and without sacrifying much in terms of performance as designs that have
the most specialization (i.e., $m = 1$). 
%Similarly, designs that require more than 89 units of materialization cost 
%can route queries with the same performance for $m = 1$, $m = 2$ or $m = 3$. 
$\rita$ requires a total of $10$ seconds to generate the points in the chart. 

%%%

%%%
\section{Conclusion}
%%%

In this paper, we introduced \rita, a novel
index tuning advisor for replicated databases, 
that provides DBAs with a powerful tool for divergent
index tuning. The key technical contribution of $\rita$
is a reduction of the problem to
a compact binary integer program,
which enables the efficient computation of a (near-)optimal divergent design using mature, off-the-shelf software for linear optimization.
Our experimental studies 
demonstrate that, compared to state-of-the-art solutions,
\rita\ offers richer tuning functionality
and is able to compute divergent designs
that result in significantly better performance.

%\end{document}  % This is where a 'short' article might terminate

%
% The following two commands are all you need in the
% initial runs of your .tex file to
% produce the bibliography for the citations in your paper.
{\small
\bibliographystyle{abbrv}
\bibliography{ref}  % sigproc.bib is the name of the Bibliography in this case
}
% You must have a proper ".bib" file
%  and remember to run:
% latex bibtex latex latex
% to resolve all references
%
% ACM needs 'a single self-contained file'!
%
%APPENDICES are optional
\balancecolumns

\newpage

\appendix

\section{Proving Theorem 1 }
\label{app:hard}

We reduce the original problem studied in~\cite{Divergent2012} to $\divergent$ by proving their equivalence when $\alpha=0$ and $C$ contains solely a space-budget constraint per replica. Since the original problem is NP-Hard, the same follows for $\divergent$.  The result in Lemma~\ref{lemma:equivalance} (See below) is the key to prove their equivalence. 
It is important to note from Section~\ref{sec:compare} 
that in the general setting of \divergent, 
$\rcomponent{0}(q)$ might not correspond to the $m$ replicas 
with the least evaluation cost for $q$.

\begin{lemma}\label{lemma:equivalance}
	In the problem setting of $\divergent$ when $\alpha=0$ and $C$ contains solely a space-budget constraint per replica,
	$\rcomponent{0}(q)$ corresponds to the $m$ replicas 
with the least evaluation cost for $q$.	
\end{lemma}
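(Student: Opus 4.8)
The plan is to argue by a simple exchange/contradiction argument on the routing component. Fix an optimal divergent design $(\configuration, \routing)$ for the instance of $\divergent$ with $\alpha = 0$ and $C$ containing only a per-replica space-budget constraint, and suppose toward a contradiction that for some query $q \in Q$ the set $\rcomponent{0}(q)$ is \emph{not} the set of $m$ replicas with least $\cost(q, \confcomponent{\cdot})$. Then there exist replicas $i \in \rcomponent{0}(q)$ and $j \notin \rcomponent{0}(q)$ with $\cost(q, \confcomponent{j}) < \cost(q, \confcomponent{i})$. The first thing I would observe is that, since $\alpha = 0$, the objective reduces to $\totalcost(\configuration, \routing)$, whose query term is $\sum_{q \in Q}\sum_{r \in \rcomponent{0}(q)} \frac{f(q)}{m}\cost(q, \confcomponent{r})$, and crucially the index configurations $\configuration$ and the space-budget constraint in $C$ do \emph{not} depend on $\rcomponent{0}$ at all --- the only inter-replica-type constraint that could couple routing to the configurations, namely load skew, is excluded by hypothesis ($C$ is solely a space budget).

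The key step is the exchange: define a new routing function $\rcomponent{0}'$ that agrees with $\rcomponent{0}$ on every query except $q$, and on $q$ sets $\rcomponent{0}'(q) = (\rcomponent{0}(q) \setminus \{i\}) \cup \{j\}$. This still routes $q$ to exactly $m$ replicas, so $(\configuration, \routing')$ is a feasible divergent design (feasibility of $\configuration$ is untouched, and there are no other routing-dependent constraints). Comparing objectives, all terms cancel except the contribution of $q$, giving
\[
\totalcost(\configuration, \routing') - \totalcost(\configuration, \routing) = \frac{f(q)}{m}\bigl(\cost(q,\confcomponent{j}) - \cost(q,\confcomponent{i})\bigr) < 0,
\]
contradicting optimality of $(\configuration, \routing)$. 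Hence in any optimal design $\rcomponent{0}(q)$ must be (a set of) $m$ replicas with least evaluation cost for $q$; ties can be broken arbitrarily without affecting the cost, which matches the statement. I would then note the analogous argument forces nothing new on $\rcomponent{1},\dots,\rcomponent{N}$ for the equivalence we need, since the original problem of~\cite{Divergent2012} ignores failures.

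I expect the only delicate point --- and the place to be careful rather than a genuine obstacle --- is making explicit \emph{why} the exchange preserves feasibility: one must verify that none of the admissible constraints in this restricted $C$ (per-replica space budget) reference the routing functions, and that the update term of $\totalcost$ is entirely independent of $\rcomponent{0}$ (it routes every update to all replicas regardless). Once that is spelled out, the argument is a one-line exchange. A secondary subtlety is the tie-breaking wording: if several replicas achieve the $m$-th smallest cost, "the $m$ replicas with the least evaluation cost" is not unique, so I would state the lemma's conclusion as "$\rcomponent{0}(q)$ can be taken to be, and in any optimal solution equals up to cost-preserving ties, the set of $m$ lowest-cost replicas for $q$," which is exactly what is needed to invoke the hardness reduction from~\cite{Divergent2012}.
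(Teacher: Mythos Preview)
Your proposal is correct and follows essentially the same exchange/contradiction argument as the paper: assume some $q$ is routed to a replica $r_1$ while a cheaper replica $r_2$ is excluded, swap them in $\rcomponent{0}(q)$, and observe that $\totalcost$ strictly decreases while feasibility (space budget, routing multiplicity) is preserved. Your write-up is in fact more careful than the paper's, spelling out why the space-budget constraint and the update term are independent of $\rcomponent{0}$ and noting the tie-breaking caveat, but the core idea is identical.
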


We prove Lemma~\ref{lemma:equivalance}
using contradiction. 
Assume that for some query $q$, there exist two replicas $r_1$ and $r_2$ such that
$r_1 \in \rcomponent{0}(q)$, $r_2 \not\in \rcomponent{0} (q)$
and $cost(q, I_{r_1}) > cost(q, I_{r_2})$.
We then derive another routing function $\routing'$
that is similar to $\routing$ except that  $\rcomponent{0}'$ is slightly modified as follows: 
$\rcomponent{0}'(q) = \rcomponent{0}(q) \cup \{ r_2 \} - \{ r_1 \} $. 
Clearly, $\totalcost(\configuration, \routing) > \totalcost(\configuration, \routing')$.
This contradicts to the requirement to minimize $\totalcost(\configuration, \routing)$
in the problem setting of \divergent.

%%% 
\section{Proving Theorem 2}
\label{app:div-equivalent}
%%%

We prove the theorem in two steps. 
First, we show that every divergent design $(\configuration, \routing)$
corresponds to a value-assignment $\va$
for variables in the BIP 
such that $\va$ satisfies the constraints
(Lemma~\ref{lemma:space}). 
This property guarantees that
the solution space of the BIP contains all possible solutions for the
divergent design tuning problem. 
Subsequently, we prove that the optimal assignment $\va^*$ corresponds
to a divergent design.
Combining these two results, we can then conclude the
correctness of the theorem (Lemma~\ref{lemma:optimal}).

To simplify the presentation and without loss of generality,
we prove the theorem for the basic \divergent\
when $\alpha = 0$, $C = \emptyset$ and
the workload comprises solely queries, i.e., $\workload = Q$.

Given a valid-assignment $\va$, we use $BIPcost(\va)$ to denote the value
of the objective function of the BIP under the assignment $\va$.

%%%
\begin{lemma} \label{lemma:space}
For any divergent physical design $(\configuration, \routing)$,  
there is an assignment $\va$ s.t.
$\totalcost(\configuration, \routing) = BIPcost(\va)$. 
\end{lemma}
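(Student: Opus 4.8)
The plan is to give an explicit construction of the assignment $\va$ from a divergent design $(\configuration,\routing)$, and then verify two things: that $\va$ satisfies every constraint of the BIP, and that $BIPcost(\va)$ evaluates exactly to $\totalcost(\configuration,\routing)$. First I would set the ``design'' variables directly from the design: $s^r_a = 1$ iff $a \in \confcomponent{r}$, and $t^r_q = 1$ iff $r \in \rcomponent{0}(q)$. The routing constraint~\eqref{eq:div-top-m-q} holds because $|\rcomponent{0}(q)| = m$ by definition of a divergent design. Then, for each query $q$ and each replica $r$ with $t^r_q = 1$, I would invoke linear composability (Definition~\ref{def:linear}): there is a template $p^{\ast} = p^{\ast}(q,r) \in \TPlans{q}$ and an atomic configuration $A^{\ast} = A^{\ast}(q,r) \in \atom(\confcomponent{r})$ attaining the minimum $\cost(q,\confcomponent{r}) = \beta_{p^{\ast}} + \sum_{a \in A^{\ast}} \gamma_{p^{\ast} a}$. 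I would then set $y^r_{p^{\ast}} = 1$ and $x^r_{p^{\ast} a} = 1$ for every $a \in A^{\ast}$ (including the $\noindex{i}$ symbols for relations not covered by $A^{\ast}$), and set all other $y^r_{p}$, $x^r_{pa}$ to $0$. For replicas $r$ with $t^r_q = 0$, all $y^r_p$ and $x^r_{pa}$ for $q$ are left at $0$.

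Next I would check the remaining constraints one by one. Constraint~\eqref{eq:div-one-template} holds because exactly one template is picked per $(q,r)$ when $t^r_q = 1$ and none when $t^r_q = 0$. Constraint~\eqref{eq:div-atomic} holds because $A^{\ast}$, being atomic, prescribes exactly one access method (an index or a $\noindex{i}$) for each referenced relation, so $\sum_{a \in \Iset_i \cup \{\noindextiny{i}\}} x^r_{p^{\ast} a} = 1 = y^r_{p^{\ast}}$ for every referenced $T_i$, and both sides are $0$ for unselected templates. Constraint~\eqref{eq:recommend-index} holds because whenever $x^r_{p^{\ast} a} = 1$ with $a \in \Iset$ we have $a \in A^{\ast} \subseteq \confcomponent{r}$, hence $s^r_a = 1$; and $s^r_a \ge 0$ trivially covers the case $x^r_{pa} = 0$. (The boxed update constraints and $Q_{upd}$ are irrelevant here since $\workload = Q$.) Finally, plugging $\va$ into $\costbip(q,r)$ in Equation~\eqref{eq:cost-formula} gives $\beta_{p^{\ast}} + \sum_{a \in A^{\ast}} \gamma_{p^{\ast} a} = \cost(q,\confcomponent{r})$ when $t^r_q = 1$ and $0$ otherwise, so $\bipquerycost(\configuration,\routing) = \sum_{q \in Q}\sum_{r \in [1,N]} \frac{f(q)}{m}\costbip(q,r) = \sum_{q \in Q}\sum_{r \in \rcomponent{0}(q)} \frac{f(q)}{m}\cost(q,\confcomponent{r}) = \totalcost(\configuration,\routing)$, as required.

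I do not expect a serious obstacle here; the lemma is the ``easy direction'' (feasibility and cost-matching of a particular assignment), and the real work is deferred to Lemma~\ref{lemma:optimal}, which must rule out spurious optima. The one point that needs care is that the assignment is \emph{consistent} across queries: the variable $s^r_a$ is shared by all queries routed to $r$, so I must check that setting $s^r_a = 1$ for all $a \in \confcomponent{r}$ up front is compatible with every per-query choice of $A^{\ast}(q,r) \subseteq \confcomponent{r}$ — which it is, precisely because each $A^{\ast}(q,r)$ is drawn from $\atom(\confcomponent{r})$, so constraint~\eqref{eq:recommend-index} never forces $s^r_a = 1$ for an index outside $\confcomponent{r}$. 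A second minor subtlety is handling the $\noindex{i}$ pseudo-indexes correctly so that constraint~\eqref{eq:div-atomic} balances; these carry no $s^r_a$ obligation (the constraint~\eqref{eq:recommend-index} ranges only over $a \in \Iset$), so they can always be set freely to complete an atomic configuration.
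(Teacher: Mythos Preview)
Your proposal is correct and follows essentially the same approach as the paper: construct $\va$ by setting $s^r_a$, $t^r_q$ from $(\configuration,\routing)$ directly, then use linear composability to pick a witnessing template $p^{\ast}$ and atomic configuration $A^{\ast}$ for each $(q,r)$ with $r\in\rcomponent{0}(q)$, set the corresponding $y^r_p$, $x^r_{pa}$ accordingly, and zero out everything else. Your verification of the constraints is in fact more thorough than the paper's (which only spells out~\eqref{eq:div-top-m-q} and asserts the rest), and your remarks about the cross-query consistency of $s^r_a$ and the role of the $\noindex{i}$ symbols are correct and worth keeping.
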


%%%
\begin{lemma} \label{lemma:optimal}
	Let $\vastar$ denote the solution to the BIP problem. 
	Then, $\totalcost(\configuration, \routing) = BIPcost(\vastar)$,
	where $(\configuration, \routing)$ is the divergent design derived 
	from $\vastar$. 
\end{lemma}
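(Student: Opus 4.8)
The plan is to show that the optimal BIP solution $\vastar$ is a \emph{valid} assignment in the sense of Lemma~\ref{lemma:space} — i.e., that every variable in $\vastar$ has the intended semantic meaning (the $s^r_a$ encode index sets, the $t^r_q$ encode a routing, and the $y^r_p, x^r_{pa}$ encode a cost-optimal plan choice) — and then to invoke Lemma~\ref{lemma:space} in the reverse direction. Concretely, from $\vastar$ we read off a candidate divergent design $(\configuration,\routing)$ by setting $I_r = \{a \mid s^r_a = 1\}$ and $\route{0}(q) = \{r \mid t^r_q = 1\}$; the constraints~\eqref{eq:div-top-m-q} and~\eqref{eq:div-one-template} guarantee $|\route{0}(q)| = m$, and constraints~\eqref{eq:div-one-template}, \eqref{eq:recommend-index}, \eqref{eq:div-atomic} guarantee that for each $q$ and each $r$ with $t^r_q = 1$, the active $y^r_p$ and $x^r_{pa}$ variables pick out a single template $p$ together with an atomic configuration $A \subseteq I_r$, so that $\costbip(q,r)$ in Equation~\eqref{eq:cost-formula} equals $\cost(p,A)$ for that particular pair. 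This gives a well-defined design, and by construction $BIPcost(\vastar) = \TotalCost(\configuration,\routing)$ \emph{provided} the plan/atomic-configuration chosen for each routed pair is the cost-minimizing one.

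The crux is precisely that last proviso: I would argue that at optimality the solver is forced to pick, for every routed pair $(q,r)$, the template $p$ and atomic configuration $A \in \atom(I_r)$ that \emph{minimize} $\cost(p,A)$ over all choices consistent with the fixed $s^r_a$ values. This is where the absence of constraints ($C = \emptyset$) matters: since $\costbip(q,r)$ appears only in the objective (with a strictly positive coefficient $f(q)/m$) and is not coupled to any other variable through a constraint, any assignment of $y^r_p, x^r_{pa}$ that is feasible but not cost-minimal could be replaced by the cost-minimal one without violating feasibility and while strictly decreasing the objective — contradicting optimality of $\vastar$. Hence $\costbip(q,r) = \cost(q, I_r)$ for the design read off from $\vastar$. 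Combining: $BIPcost(\vastar) = \TotalCost(\configuration,\routing)$, which is exactly the claim.

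To close the argument that this $(\configuration,\routing)$ is actually \emph{optimal} for $\divergent$ (which is what Theorem~\ref{theorem:div-correctness} ultimately needs), I would combine the two lemmas: Lemma~\ref{lemma:space} shows every feasible design $(\configuration',\routing')$ has a corresponding feasible assignment $\va'$ with $BIPcost(\va') = \TotalCost(\configuration',\routing')$, so $BIPcost(\vastar) \le BIPcost(\va') = \TotalCost(\configuration',\routing')$; and Lemma~\ref{lemma:optimal} shows $BIPcost(\vastar) = \TotalCost(\configuration,\routing)$; therefore $\TotalCost(\configuration,\routing) \le \TotalCost(\configuration',\routing')$ for every feasible $(\configuration',\routing')$.

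I expect the main obstacle to be the careful handling of the $t^r_q = 0$ case — showing that when a query is not routed to a replica, constraints~\eqref{eq:div-one-template} and~\eqref{eq:div-atomic} correctly zero out all $y^r_p$ and $x^r_{pa}$ so that $\costbip(q,r) = 0$ and contributes nothing spurious — together with making the ``swap to the cost-minimal plan'' exchange argument fully rigorous, i.e., verifying that such a swap keeps \emph{all} constraints satisfied (it only touches $y^r_p, x^r_{pa}$ for a single $(q,r)$, and those variables are constrained only by~\eqref{eq:div-one-template}, \eqref{eq:recommend-index}, \eqref{eq:div-atomic}, all of which the cost-minimal choice also satisfies since the minimizing $A$ lies in $\atom(I_r)$). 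The linear-composability guarantee from Definition~\ref{def:linear} — specifically that INUM always provides at least one template $p$ with $\cost(p,A) < \infty$ for every $A \in \atom(X)$ — is what ensures the cost-minimal choice is always available, so feasibility is never lost in the exchange.
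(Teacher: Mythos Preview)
Your proposal is correct and follows essentially the same approach as the paper: read off $(\configuration,\routing)$ from the optimal assignment via the $s^r_a$ and $t^r_q$ variables, use constraints~\eqref{eq:div-top-m-q}--\eqref{eq:div-atomic} to verify this yields a well-defined design with a single template plan and atomic configuration per routed pair, and then argue by an exchange/contradiction that the chosen plan must be cost-minimal since otherwise swapping to the minimizer would strictly lower the objective while preserving feasibility. Your treatment is in fact more careful than the paper's (you explicitly handle the $t^r_q=0$ case, note the role of $C=\emptyset$ in decoupling $\costbip(q,r)$ from other constraints, and verify the exchange preserves all constraints), but the skeleton is identical.
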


%%%
\subsection{Proof of Lemma~\ref{lemma:space}}
%%%

Given a divergent design $(\configuration, \routing)$
and for every query $q \in Q$, 
using the linear 
decomposability property, we can express
the cost of $q$ at replica $r \in \route{0}(q)$
as:
\[
cost(q, \confcomponent{r}) = \beta_{p} + \sum_{i \in [1,n], a = Y[i]} \gamma_{pa}
\]
for some choice of $p = p^r \in \TPlans{q}$ and 
$Y = Y^{p, r} \in \atom(\confcomponent{r})$. 
We assign
the values for variables as follows.

\begin{itemize}
	\item $\va(t^r_q) = 1$ if $r \in \route{0}(q)$, 
	\item $\va(y^r_{p})= 1$ if $p = p^r$, $r \in \route{0}(q)$, 
	\item $\va(x^{r}_{pa}) = 1$ if $p = p^r$, $r \in \route{0}(q)$ and $a = Y^{p, r}[i]$, $i \in [1,n]$, 
	\item $\va(s^{r}_a) =1$ if $a \in I_r$,  $r \in [1, N]$, and 
	\item The other cases of variables are assigned value $0$
\end{itemize}

We observe that under this assignment, 
all constraints in the BIP are satisfied. 
For instance, since $\va(t^r_q) = 1$ when $r \in \route{0}(q)$ and $\route{0}(q)$
has $m$ values,
it can be immediately derived that $\sum_{r \in [1,N]} t^r_q = m$, i.e., 
constraint~\eqref{eq:div-top-m-q} is satisfied.

By eliminating terms with value $0$, we obtain the following results. 
%%%
\begin{equation*}
BIPCost(\va) = \sum_{q \in Q} \sum_{r \in \route{0}(q) } \frac{f(q)}{m} \costbip(q, r)
\end{equation*}
%%%
\begin{equation*}
\costbip(q, r) =  \beta_{p} + \sum_{i \in [1,n], a = Y[i]} \gamma_{pa},  \mbox{ for } r \in \route{0}(q), p = p^r, 
Y = Y^{p, r} \in \atom(\confcomponent{r})
\end{equation*}

Thus, $BIPCost(\va) = \totalcost(\configuration, \routing)$.

%%%
\subsection{Proof of Lemma~\ref{lemma:optimal} }

The following arguments are derived based on the assumption that
$\vastar$ satisfies the BIP formulation.

First, based on~\eqref{eq:div-top-m-q},
we derive that for every query $q$, there exists a set $S_q = \{ r \ | \ r \in [1, N] \}$ and $|S_q| = m$
such that $\vastar(t^r_q) = 1$ iff $r \in S_q$.

Second, based on~\eqref{eq:div-one-template},
we derive that for every query $q$ and every $r \in S_q$,
there exists exactly one plan $p = p^r \in \TPlans{q}$ 
such that $\vastar(y^r_p) = 1$.

Third, based on~\eqref{eq:div-atomic},
there exists an atomic configuration $Y^{p, r}$, $r \in S_q$, $p = p^r$
that corresponds to the assignments for $\va(x^r_{pa})$.

Finally, we prove that $p^r$ and $Y^{p, r}$, $r \in S_q$, 
correspond to 
the choice of plan $p$ and atomic configuration $Y$ that yields the minimum value
of $cost(q, \confcomponent{r})$, by 
using contradiction. 
Combining these results, we conclude that $BIPCost(\vastar) = \totalcost(\configuration, \routing)$.

%%%
Suppose that 
there exists a different choice $p^c \in \TPlans{q}$
and $Y^c \in \atom(\confcomponent{r})$, $r \in S_q$,
such that $cost(q, p^c, Y^c) < cost(q, p^r, Y^{p, r}$). 
Here, we use $cost(q, p, Y)$ denote the cost of $q$ 
using the template plan $p$ and the atomic configuration $Y$.

We can now derive an alternative assignment $\vacounter$ 
that is similar to $\vastar$ except the followings:
\begin{itemize}
	\item Variables corresponding to $p^r$ and $Y^{p, r}$ are assigned value $0$,  and
	\item $\va(y^r_p) = 1$ if $p = p^c$, $r \in S_q$, and 
	\item $\va(x^r_{pa}) = 1$, if $p = p^c$, $r \in S_q$ and $a = Y^{c}[i]$, $i \in [1,n]$. 	
\end{itemize}

We observe that  $\vacounter$
is a valid constraint-assignment for the formulated 
BIP. However, since $BIPcost(\vacounter) < BIPcost(\vastar)$, 
this contradicts our assumption about the optimality of $\vastar$.

%%%
\section{Factoring Failures}
\label{app:failure}
%%%

In this section, we present the full details of how \rita\ integrates failures into the BIP.

Under our assumption of using fast what-if optimization, the cost of a query $q$ in some replica $r$ can be expressed as $\cost(q,I_r) = \cost(p',A')$ for some choice of $p' \in \tplans(q)$ and an atomic configuration $A' \in \atom(I_r)$ We introduce the following additional variables. 

\begin{itemize}
	\item $t^{r,j}_q=1$ if and only if $q$ is routed to replica $r$ when $j$ fails, i.e., 
$\route{j}(q)=\{r \ | \ t^{r,j}_q=1\}$
	\item $x^{r,j}_{pa} = 1$ if and only if $q$ is routed to replica $r$ when $j$ fails, 
	$p=p'$ and $a \in A'$.
	\item $y^{r, j}_p = 1$ if and only if $q$ is routed to replica $r$ when $j$ fails, $p=p'$. 
\end{itemize}

We also need to add a new set of constraints, as given in Figure~\ref{fig:failure}.
These constraints are very similar to their counterparts  in Figure~\ref{fig:bip_basic}.
The correctness of the BIP is proven in the same way as presented in Appendix~\ref{app:div-equivalent}. 

\begin{figure}[t]
{\small
  	%%%
	  \begin{equation*} 
		\ftotalcost (\configuration, \routing, j) 
				 = \sum_{ q \in Q} \sum_{r \in [1,N] \wedge r \neq j} 
					 \frac{f(q)}{\max{m, N-1}} \costbip(q, r, j)			
		\end{equation*}					 
		%%%
		%%%
	
	 \begin{equation} 
	\costbip(q, r, j) = 
		\sum_{p \in \TPlans{q}}\beta_{p}y^{r, j}_{p} + 
		%%%
		\sum_{\substack{p \in \TPlans{q} \\ a \in \Iset \cup 
			\{ \noindextiny{1} \} \cup \cdots \cup \{ \noindextiny{n} \}
			}}
		\gamma_{pa}x^{r, j}_{pa}, 		
		\substack{\forall r \in [1, N],  \\ \forall q \in Q \cup Q_{upd}}
	  %%%				 
 \end{equation}
 such that:	
 %%%%%%%%%%%%%%%%%%%%%
	\begin{equation}
		\sum_{r \in [1, N]} t^{r, j}_q = \max\{N-1, m\}, \forall q \in Q
	\end{equation}
 %%%%%%%%%%%%%%%%%%%%%	
 %%%%%%%%%%%
	\begin{equation}
		\sum_{p \in \TPlans{q}} y^{r, j}_{p} = t^{r, j}_q,\ \ \ \forall q \in Q \cup Q_{upd}		
	\end{equation}
  %%%%%%%%%%%%%%%%%%%%%
  \begin{equation} 
  	s^{\rep}_a  \geq x^{r, j}_{pa},
		\ \ \ \forall q \in Q \cup Q_{upd}, 
			 p \in \TPlans{q}, \  a \in \Iset
  \end{equation}
	%%%%%%%%%%%	
 \begin{equation} 
     \sum_{a \in \Iset_i \cup \{ \noindextiny{i} \} }x^{r, j}_{pa} = y^{r, j}_{p}, 
		\ \ \substack{ \forall q \in Q \cup Q_{upd},  p \in \TPlans{q}, \\
						  i \in [1,n], \ T_i \mbox{ is referenced in q} 
					 }	 
 \end{equation} 
 
}
\caption{Augmented BIP to handle failures. \label{fig:failure}}
\end{figure}

%%%
\section{Bounding Load-skew}

\subsection{Additional Constraints for Exact Solution}
\label{app:exact-balance}

This section presents the set of constraints 
that \rita\ formulates in order to ensure the optimality of $\costbip(q,r)$
with the presence of bounding load-skew constraints.

\rita\ introduces a new cost formula $\costopt(q,r) = cost(q, I_r)$ for
$r \in [1,N]$.
The formula of $\costopt(q,r)$ is very similar to $cost(q,r)$;
the variables $\yopt^{r}_{p}$ (resp. $\xopt^{r}_{pa}$)
have the same meaning with $y^{r}_{p}$ (resp. $x^{r}_{pa}$).
The main difference is that
for $r \not \in \rcomponent{0}(q)$,
we have $\costbip(q, r) = 0$ whereas $\costopt(q,r) = cost(q, I_r) > 0$.
The atomic constraint in \eqref{eq:div-atomic-opt}
are somehow similar to the atomic constraints on $cost(q,r)$.
Note that in \eqref{eq:div-atomic-opt-first}, the constraint
requires exactly one template plan to be chosen to compute 
$\costopt(q, r)$ in order for this value corresponds
to the query execution cost of $q$ on replica $r$.

%%%%%%%%%
%%%
\begin{figure}[t] 
{\small
\begin{equation} \label{eq:div-cost-opt}
	\costopt(q, r) = 
		\sum_{p \in \TPlans{q}} 
			\beta_{p}\yopt^{r}_{p} 
		+ 
		\sum_{ \substack{
				p \in \TPlans{q} 
					\\ a \in \Iset \cup \{ \noindextiny{1} \} \cup \cdots \cup \{ \noindextiny{n} \} 
				} 
		     }	
		\gamma_{pa}\xopt^{r}_{pa}, 
		%%%
		\substack{ \forall q \in Q \cup Q_{upd} \\ \forall r \in [1, N]}
		%%%
 \end{equation}
	%%%
	%%%
 \begin{subequations} \label{eq:div-atomic-opt}
	\begin{align}
		\sum_{p \in \TPlans{q}} \yopt^{r}_{p} &= 1 \label{eq:div-atomic-opt-first} \\
		%%%
		\sum_{
			a \in \Iset_i \cup \{ \noindextiny{i} \}
			}
		\xopt^{r}_{pa} &= \yopt^{r}_{p}, 
		\ \ 
		\substack{ 
			\forall p \in \TPlans{q} \\
			 \forall i \in [1,n]  \wedge \ T_i \mbox{ is referenced in q} 
			 }	
		 \label{eq:div-atomic-opt-second} 
	\end{align}
 \end{subequations} 
	%%%
	\begin{equation}\label{eq:optimal-beta-opt}
		\costopt(q, r) \leq 
				\beta_{p} + \sum_{\substack{i \in [1,n] \\ a \in \Iset \cup 
			\{ \noindextiny{1} \} \cup \cdots \cup \{ \noindextiny{n} \} } } 													\gamma_{pa}u^{r}_{pa}, \ \  \forall p \in \TPlans{q}
	\end{equation}
	%%%
	\begin{subequations}\label{eq:optimal-atomic-opt}
	\begin{align}
		%%%
		\sum_{a \in \Iset_i \cup I_{\emptyset}} u^{r}_{pa} &= 1, 
		\ \ \substack{ \forall t \in [1,K_q], \\ \forall i \in [1,n] \ \wedge \ T_i 
			\mbox{ is referenced in q} 
			}		
		\label{eq:optimal-atomic-opt-1} \\
		u^{r}_{pa} & \leq s^{r}_a, \forall p \in \TPlans{q} \wedge a \in \Iset 
		\label{eq:optimal-atomic-opt-2} 
	\end{align}		
	\end{subequations}
	%%%
	\begin{equation} \label{eq:optimal-gamma-opt}
        \sum_{b \in \Iset_i \cup I_{\emptyset} \ \wedge \ \gamma_{pa} \geq \gamma_{pb}}
        	u^r_{pb} \geq s^r_a, \ \ \forall p \in \TPlans{q}, i \in [1,n], a \in \Iset_i
	\end{equation}

}
\caption{Query-Optimal Constraints \label{imbalance-common}}
\end{figure}

To establish the optimal cost constraints,
we use the following alternative
way to compute $cost(q,X)$. 
For each internal plan cost $\beta_p$, $p \in \TPlans{q}$, 
we first derive a ``local'' optimal cost,
referred to as $C^{local}_t$, 
which is the smallest cost that can be obtained by ``plugging'' 
all possible atomic configurations $A \in \atom(X)$ into the slot of 
the template plan of $\beta_{p}$.
Essentially, $C^{local}_t = \beta_{p} + I^{local}_p$,
where $I^{local}_p$ is the smallest value of the total access cost 
using some atomic-configuration $A \in \atom(X)$ to plug into the template 
plan of $\beta_p$. 
To obtain $I^{local}_p$, for each slot in the internal plan of $\beta_{p}$, 
we enumerate all possible indexes in $X$ that can be ``plugged'' into, 
and find the one that yields the smallest access cost to sum up into $I^{local}_p$. 
Lastly, $cost_q(X)$ is then obtained 
as the smallest value among the derived $C^{local}_p$ with $p \in \TPlans{q}$.

The right hand-side of \eqref{eq:optimal-beta-opt} 
is the formula of $C^{local}_p$. Here, we introduce 
variables $u^{r}_{pa}$; where
$u^{r}_{pa} = 1$  iff the index $a$ is used at slot $i$ 
in the template plan $\beta_{p}$
to compute $C^{local}_p$. 	
For $C^{local}_p$ to correspond to some atomic configuration,
we impose the constraint in \eqref{eq:optimal-atomic-opt-1}.

Furthermore, an index $a$ can be used in $C^{opt}_p$ if and
only if $a$ is recommended at replica $r$ (constraint 
\eqref{eq:optimal-atomic-opt-2}).

The constraint \eqref{eq:optimal-gamma-opt} ensures that
the candidate index with the smallest 
access cost is selected to plug into each slot 
of $\beta_t$ in computing $I^{local}_t$.

%%%%
\subsection{Greedy Approach}
\label{app:greedy}
%%%

This section presents our proposal of a greedy scheme 
that trade-offs the quality of the design for the efficiency. 

First, we derive 
an optimal design $(\configuration_{opt}, \routing_{opt})$
assuming there is no load imbalance constraint
and the probability of failure is $0$. 
We then compute
an approximation factor $\beta = \frac{\nodefactor - 1}{1 + (N - 1) \nodefactor}$.
and add
the following constraint into the BIP. 
\begin{equation} \label{eq:div-greedy}
\load(\configuration, \mapping, r) \leq \frac{(1 + \beta) \totalcost(\configuration_{opt}, \mapping_{opt})}{N}, 
\forall r \in [1,N]
\end{equation}

This constraint is an easy constraint, as its right handside is a constant.
We prove that if the BIP solver can find a solution
for the modified BIP,
the returned solution is a valid solution 
and has $\totalcost(\configuration, \routing)$ bounded
as the following theorem shows.

%%%%%%%%5
\begin{theorem} \label{theorem:node-factor}
The divergent design returned by the greedy solution
satisfies all constraints in \divergent\ problem 
and has \linebreak $\totalcost(\configuration, \routing) \leq 
(1 + \beta) \totalcost(\configuration_{opt}, \routing_{opt})$. \eop
\end{theorem}

%%%
\begin{proof}

We overload $\Iopt$ (resp. $\configuration$) to refer to the total cost
of the design $I_{opt}$ (resp. $\configuration$) as well.

The maximum load of a replica in $\configuration$
is $\frac{(1 + \beta) \Iopt}{N}$
(due to the constraint~\ref{eq:div-greedy}).
By summing up the load 
of all replicas in $\configuration$,
we obtain: $\configuration \leq (1 + \beta) \Iopt$.
Therefore, $\configuration$ differs from
$\Iopt$ by an approximation ratio $(1 + \beta)$. 
All remaining issue is to prove 
that $\configuration$ satisfies the load-imbalance constraint.

Without loss of generality, assume that $load(1, \configuration) \leq load(j, \configuration)$, $\forall j \in [2,N]$.

Since $\configuration$ is load-imbalance,
we can derive the followings:

\begin{subequations}\label{eq:prove-atomic}
			\begin{align}
				\configuration = \sum_{j \in [2,N] }
				\load(j, \configuration) + \load(1, \configuration)
				\\
				\frac{(1 + \beta) \Iopt}{N} 
				\geq \load(j, \configuration)
				\\
				\frac{(N - 1)} {N} (1 + \beta) \Iopt
				+ load(1, \configuration)
				\geq \configuration	 \\			
				\configuration \geq \Iopt
				\\
			load(1, \configuration) \geq \left( 1 -  \frac{(N - 1)} {N} (1 + \alpha) \right) \Iopt
			\end{align}
		\end{subequations}

The maximum load in $\configuration$
is $\frac{1}{N} (1 + \alpha) \Iopt$
and the minimum load
is $\left( 1 -  \frac{(N - 1)} {N} (1 + \alpha) \right) \Iopt$.
Therefore, the load-imbalance factor of $\configuration$
is $\frac{1 + \beta}{ 1 - (N - 1) \alpha}$. 
By replacing the value of $\beta$,
we obtain the load-imbalance factor $\nodefactor$. 
\end{proof}

Note that this greedy scheme 
does not encounter the aforementioned problem
with $\costbip(q, r)$ not to be equal to
$cost(q,I_r)$.
Informally, the reason is due to the fact that the right hand-side 
of the inequality constraint in~\eqref{eq:div-greedy}
is a constant.

\eat{
\section{Additional Intra-Replica Constraints}
\label{app:additional-local-constraints}

This section presents some additional intra-replica constraints
that can be easily incorporated into the framework of \divgbip

%%%
\subsection{Index Constraints}
%%%
This type of constraints specify conditions on the indexes
in the recommendation, such as conditions on their size (i.e., the space budget constraint),
contained columns, or their column-width. 

For instance, the DBA can specify that at most $2$
indexes containing more than $5$ columns
should be selected on the table $T_i$ and on a particular replica.
\divgbip\ first derives a subset $S_c \subset \Iset$
that contains all indexes with more than $5$ columns,
and sets the following constraints:

\begin{equation}
	\sum_{a \in S_c} s^r_a \leq  2,  \forall r \in [1, N]
\end{equation}

%%%
\subsection{Update Cost Constraints}
%%%

In replicated databases, when more replicas are deployed, the total update cost
usually increases, since we need to perform the update at more replicas.
Therefore, it will be useful if we can somehow 
bound the update cost that the replicated
databases need to perform in addition to space budget constraint.
This type of constraint
can also be easily incorporated into our BIP-based
framework. 
More specifically, let $B_{upd}$ denote 
the update bound of update cost. 
A constraint that bounds
the update cost on a divergent system with $N$
replicas can be integrated with one
additional constraint: $\updatecost < B_{upd}$.
Since $\updatecost$ is a linear expression, this constraint
still preserves the linear property of the formulated BIP. 
There are some ways that the DBAs can determine
the value of $B_{upd}$.
For instance, $B_{upd}$ can be the update cost
on a uniform design of a replicated database with less number of replicas.
As another example, \TODO{Find other alternatives.}
}

\subsection{Additional Experimental Results}
\label{app:vary-m}

This section presents the comparison 
between \rita, \divgdesign\
and \unif\
when we vary the routing multiplicity factor.
Figure~\ref{fig:vary-m} presents
one representative result
when we vary this factor on $\static$ workload
with $b = 0.5\times$ and $N = 3$.

As expected, when the value of $m$ increases,
the total cost of $\prita$ and $\pdd$
increase, since queries need to be sent to more places.
Note that the cost of $\punif$ remains the same,
as all replicas have the same index configuration under UNIF design.
Also, when $m = N$,
the total costs of $\pdd$ and $\punif$
are the same, since $\divgdesign$
needs to send every query to every replica,
and it uses the same black-box design advisor as \unif\ 
to compute the recommended index-set at each replica.

We observe that in all cases, 
\rita\ significantly outperforms \divgdesign\ and \unif. 
The reason can be again attributed to the fact that
\rita\ searches a considerably larger space of possible designs.

%%%%
\begin{figure}[t]
        \begin{center}
                   \includegraphics[width=0.6\hsize]{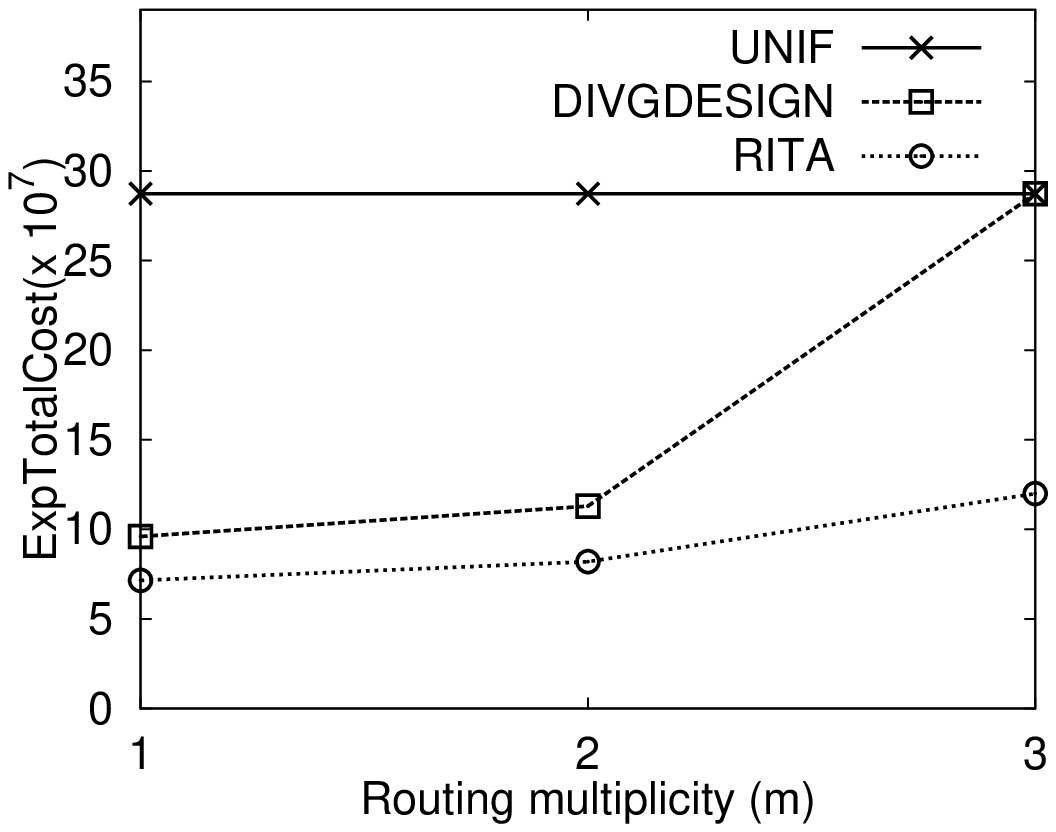}
                \caption{Varying the routing multiplicity factor on $\static$ workload}
\label{fig:vary-m}
        \end{center} %\vspace{-0.2cm}
\end{figure}
%%%%%

%%%%%%%%%%%%%%%%
\balancecolumns % GM July 2000
% That's all folks!
\end{document}